\newtheoremstyle{mythm}{3pt}{3pt}{}{16pt}{\bfseries}{:}{.5em}{}
\theoremstyle{mythm}
\newtheorem{theorem}{Theorem}
\newtheorem{example}{Example}
\newtheorem{definition}{Definition}
\newtheorem{remark}{Remark}
\newtheorem{proposition}{Proposition}
\newtheorem{corollary}{Corollary}
\newtheorem{lemma}{Lemma}
\newcommand{\tabincell}[2]{\begin{tabular}{@{}#1@{}}#2\end{tabular}}
\begin{document}
\title{On Anti-collusion Codes for Averaging Attack in Multimedia Fingerprinting
\author{~Jing Jiang, ~Cailin Wen, ~Minquan Cheng}
\thanks{Jing Jiang, Cailin Wen and Minquan Cheng are with Key Lab of Education Blockchain and Intelligent Technology, Ministry of Education and Guangxi Key Lab of Multi-Source Information Mining and Security,
Guangxi Normal University, Guilin, 541004, China (e-mail: jjiang2008@hotmail.com, wen20190225@163.com, chengqinshi@hotmail.com).}
}

\maketitle
%





\vskip 1cm

\begin{abstract}
Multimedia fingerprinting is a technique to protect the copyrighted contents against being illegally redistributed under various collusion attack models.
Averaging attack is the most fair choice for each colluder to avoid detection,
and also makes the pirate copy have better perceptional quality.
This makes such an attack one of the most feasible approaches to carrying out collusion.
In order to trace all the colluders, several types of multimedia fingerprinting codes
were introduced to  construct fingerprints resistant to averaging attacks on multimedia contents,
such as  AND anti-collusion codes (AND-ACCs), binary separable codes (SCs), logical anti-collusion codes (LACCs),
binary frameproof  codes (FPCs),  binary strongly-separable codes (SSCs) and
binary secure code with list decoding (SCLDs).
Then codes with the rate as high as possible are desired.
However, the existing fingerprinting codes have low code rate due to the strong combinatorial structure.
The reason is that the previous research methods adopted simple tracing algorithms.
In this paper, we first  propose novel tracing algorithms
and then find appropriate fingerprinting codes with weaker combinatorial structure,
i.e.,  the binary strongly identifiable parent property code for multimedia fingerprinting (SMIPPC)
and  its concatenated code.
Theoretical comparisons and numerical comparisons show that
SMIPPCs have higher code rates than those of the existing codes
due to their weaker combinatorial structures.
It is worth noting that SMIPPCs can only trace a part of colluders by using the previous tracing algorithm
and the concatenated SMIPPC may be not an SMIPPC.
This implies that our tracing algorithms have strong traceability.
\end{abstract}



\begin{IEEEkeywords}
Anti-averaging-collusion code, tracing algorithm,  code rate,
strongly multimedia identifiable parent property code
\end{IEEEkeywords}

\section{Introductions}

Multimedia contents, such as video, audio, image, can be copied and distributed easily,
especially in the Internet age.
The illegal redistribution of copyrighted contents damages the interests of copyright owners. 
It is desired to devise techniques for copyright protection of multimedia contents.

Fingerprinting techniques which by providing unique identification of data in a certain manner
can be used to fight against illegal redistribution of copyrighted contents.
Clearly, an individual user cannot redistribute his/her copy  without running the risk of being tracked down.
The global nature of the Internet has also brought adversaries
closer to each other, and it is easy for a group of authorized users with differently
marked versions of the same content to mount attacks against the fingerprints.

Linear collusion is one of the most feasible collusion attacks against multimedia fingerprinting.
Since no colluder wishes to take more of a risk than any other colluder,
the marked versions are averaged with an equal weight for each colluder \cite{EKK,KLMSTZ,Sto,SEG,TWL}.
Such an attack is called averaging attack.
In this case, the trace of each individual fingerprint becomes weaker as the number of colluders increases.
In addition, the colluded signal can have better perceptual quality in that it can be more similar to the host signal
than the fingerprinted signals are.



In order to resist collusion attacks,
an appropriate fingerprinting code, which is a set of vectors (each vector represents an authorized user's fingerprint) with desired properties, and
a corresponding tracing algorithm are required.
$t$-resilient AND anti-collusion code ($t$-AND-ACC) was proposed by Trappe {\it et al.} \cite{TWL,TWWL}
to construct fingerprinted signals to resist averaging attack,
and the tracing algorithm based on $t$-AND-ACC was also proposed in \cite{TWL,TWWL} to detect up to $t$ colluders taking part in the attack.
Several constructions for $t$-AND-ACCs can be found in \cite{DSSSU,LWLPF,LT}.
Later in 2011, Cheng and Miao \cite{CM} introduced logical anti-collusion code (LACC) where not only the logical AND operation but also
the logical OR operation is exploited, and designed the tracing algorithm based on LACCs to identify colluders.
They also found an equivalence between an LACC and a binary separable code (SC),
and showed that binary frameproof codes (FPCs),
which  were widely considered as having no traceability for generic digital data,
actually have traceability in averaging attack. And then many results of LACCs and SCs were obtained \cite{Bl2,CFJLM1,CJM,GG}.
Jiang {\it et al.}  \cite{JCM} introduced the concept of a strongly separable code (SSC) and
gave the corresponding tracing algorithm to  resist averaging attack.
They also showed that a binary SSC has more codewords than a  binary FPC but has the same traceability as a binary FPC.
Recently, Gu {\it et al.} \cite{GVM} proposed binary secure codes with list decoding (SCLDs),
and proved that binary SCLDs have not only much more efficient traceability
than separable codes but also a much larger code rate than frameproof codes.
Finally, strongly identifiable parent property code for multimedia fingerprinting (SMIPPC)
was introduced to resist averaging attack.
The authors in \cite{JGC} also showed that a binary SMIPPC can be used to trace at least one colluder.

 For the above fingerprinting codes, we typically follow a two-stage paradigm:
first defining the code structure, then developing algorithms based on this framework.
Notably, once the algorithmic framework is established,
our priorities often shift toward optimizing code structures while neglecting algorithmic improvements,
a pattern that frequently results in low code rates.
This study breaks from conventional approaches by centering innovation on algorithm design,
and ultimately achieves the following technical breakthroughs.
\begin{itemize}

\item We propose a novel soft tracing algorithm to identify all colluders in averaging attacks.
Our analysis reveals three classes of binary fingerprinting codes compatible with this algorithm:
FPCs, SSCs, and SMIPPCs. Notably, binary SMIPPCs demonstrate equivalent traceability to
binary FPCs (or binary SSCs) while achieving higher code rates.
Furthermore, we establish that binary SMIPPCs outperform binary SCs (or binary SCLDs) in both traceability
and code rate.
In summary, based on the soft tracing algorithm,
we will improve code rate of codes that can be used to trace all colluders in averaging attack.

\item
Inspired by concatenated codes, we designed a two-stage soft tracing algorithm
to identify  all colluders in averaging attack.
The code that satisfy the conditions of this algorithm can be obtained by
concatenating a $q$-ary  SMIPPC with a binary SMIPPC.
It is worth noting that
our concatenated codes exhibit superior traceability to existing fingerprinting codes.
\end{itemize}

The rest of this paper is organized as follows.
In Section \ref{sec-pre}, we recap the averaging attack model in multimedia fingerprinting.
In Section \ref{sec-AACC}, we provide a  concept of an anti-averaging-collusion code (AACC),
and show that it can be used to identify all colluders in averaging attack.
In Section \ref{sec-STA} and Section \ref{sec-two-STA},
we propose two algorithms called soft tracing algorithm and two-stage soft tracing algorithm
to identify colluders and show their performances, respectively.
Conclusion is drawn in Section \ref{sec-conclu}.

\section{The Averaging Attack Model in Multimedia Fingerprinting}
\label{sec-pre}

In this section, we briefly review the averaging attack model in multimedia fingerprinting. 
The interested reader is referred to \cite{LTWWZ} for more details.

Spread-spectrum additive embedding is a widely employed robust embedding technique \cite{CKLS,PZ},
which is nearly capacity optimal when the host signal is available in detection \cite{CW,LTWWZ}.
Let ${\bf h}$ be the host multimedia signal and  $\{ {\bf u}_i \ | \ i \in \{1,2,\ldots,n\}\}$ be an orthonormal basis of noise-like signals.
We can choose appropriate $c_{i,j} \in \{0,1\}$, $ i \in \{1,2,\ldots,n\},  j \in \{1,2,\ldots,M\}$,
and obtain a family of watermarks
\begin{equation}
\label{eq-watermark}
 \{{\bf w}_j = ({\bf w}_j(1), {\bf w}_j(2), \ldots, {\bf w}_j(n))
 = \sum_{i=1}^{n}c_{i,j}{\bf u}_i\ | \   j \in \{1,2,\ldots,M\}\}.
\end{equation}
Obviously,  ${\bf w}_j$ can be represented uniquely by a vector (called codeword)
${\bf c}_j = (c_{1,j}, c_{2,j}, \ldots, c_{n,j})$. 
Then content with  the watermarks ${\bf w}_j$, i.e., ${\bf y}_j ={\bf h} + \alpha{\bf w}_j$, is assigned to the authorized user $U_j$,
where the parameter $\alpha \in \mathbb{R}^{+}$ is used to scale the watermarks to achieve the imperceptibility as well as to control the energy of the embedded watermark.

Without loss of generality, suppose that $U_{1}, U_{2}, \ldots, U_{t}$ are authorizes users, and amount a collusion attack.
In this process, each user  cannot manipulate the individual orthonormal signals, that is,
the underlying codeword needs to be taken and proceeded as a single entity,
but the users can carry on a linear collusion attack to generate a pirate copy from their watermarked contents,
so that the venture traced by the pirate copy can be attenuated.
For the averaging attack, 
one can extract
\begin{equation*}
 {\bf y} = \frac{1}{t}\sum\limits_{j=1}^{t}{\bf y}_{j} = \frac{\alpha}{t}\sum\limits_{j=1}^{t}{\bf w}_{j} + {\bf h}=
 \alpha\sum\limits_{j=1}^{t}\sum\limits_{i=1}^{n}\frac{c_{i,j}}{t}{\bf u}_{i} + {\bf h}
\end{equation*}
from the pirated content.

In colluder detection phase, by using the extracted vector ${\bf y}$,
we can compute ${\bf x}(i) = \langle \frac{{\bf y}-{\bf h}}{\alpha}, {\bf u}_{i}\rangle$,
where $ i \in \{1,2,\ldots,n\}$
and $\langle \frac{{\bf y}-{\bf h}}{\alpha}, {\bf u}_{i}\rangle$ is the inner product of $\frac{{\bf y}-{\bf h}}{\alpha}$ and  ${\bf u}_i$.
It is not difficult to check that
\begin{equation}\label{eq-generated-word-coord}
{\bf x}(i) = \frac{1}{t}\sum_{j=1}^{t}c_{i,j}= \frac{1}{t}\sum\limits_{j=1}^{t}{\bf c}_{j}(i)
\end{equation}
for any  $i \in \{1,2,\ldots,n\}$.
Let ${\bf x} = ( {\bf x}(1), {\bf x}(2), \ldots, {\bf x}(n))$. Then
\begin{equation}\label{eq-generated-word}
{\bf x} = \frac{1}{t}\sum\limits_{j=1}^{t}{\bf c}_{j}.
\end{equation}
We refer to the vector ${\bf x}$ as  the \textit{generated word} of
$\{{\bf c}_{1}, {\bf c}_{2}, \ldots, {\bf c}_{t}\}$ based on  averaging attack.
For convenience,
denote
\begin{equation}\label{equa-gene-word}
{\bf x} = {\sf AT}(\{{\bf c}_{1}, {\bf c}_{2}, \ldots, {\bf c}_{t}\}) =(\frac{a_1}{t_1}, \frac{a_2}{t_2}, \ldots, \frac{a_n}{t_n})
\end{equation}
where for any $i \in \{1,2,\ldots,n\}$, $a_i, t_i \in \mathbb{N}$ and
\begin{equation*}
\left\{\begin{array}{rl}
a_i = 0, t_i =1, \  &   \mbox{if}  \ {\bf x}(i) = 0,\\[2pt]
\gcd (a_i,t_i) =1,  &  \mbox{otherwise}.\\[2pt]
\end{array}
\right.
\end{equation*}

To identify all the colluders $U_{1}, U_{2}, \ldots, U_{t}$ by using the generated word ${\bf x}$,
we need to choose appropriate $c_{i,j}$,
i.e., construct codes with appropriate properties, and design corresponding tracing algorithms.

In this paper, we concentrate on the multimedia fingerprinting codes to resist averaging attacks,
and we will omit the word ``averaging attack" unless otherwise stated.
Based on the above discussion, there is a one-to-one mapping between an authorized user and the assigned codeword.
Thus we also make no difference between an authorized user and his/her corresponding codeword. 

\section{ Anti-Averaging-Collusion Code}
\label{sec-AACC}


Inspired the concept of codes with totally secure in  generic digital fingerprinting
(see e.g. \cite{BS}), in this section, we introduce a  concept of an anti-averaging-collusion code (AACC),
and demonstrate that the qualified AACCs could identify all the colluders by employing the feature of averaging attacks.

Let $n, M$ and $q$ be positive integers, and $Q = \{0, 1, \ldots, q-1\}$ an alphabet.
A set $\mathcal{C} = \{{\bf c}_1,{\bf c}_2,\ldots, {\bf c}_M\} \subseteq Q^n$ is called an $(n,M,q)$ \textit{code}
and each ${\bf c}_i$ is called a \textit{codeword}.
We also use the word ``\textit{binary}" if $q=2$.
Given an $(n,M,q)$ code, its incidence matrix is an $n \times M$ matrix on $Q$ where each column is a codeword in $\mathcal{C}$.
In the sequel, we make no difference between an $(n,M,q)$ code and its incidence matrix. 

For any code $\mathcal{C} \subseteq Q^n$ and  $\mathcal{C}' \subseteq \mathcal{C}$, let  $\mathcal{C}'(i)$
denote the set of $i$-th components of codewords in $\mathcal{C}'$,  $ i \in \{1,2,\ldots,n\}$.
As in \cite{HLLT},
we define the 
\textit{descendant code} of $\mathcal{C}'$ as
\begin{equation}\label{def-desc}
  {\sf desc}(\mathcal{C}')=\mathcal{C}'(1) \times \mathcal{C}'(2) \times \cdots \times  \mathcal{C}'(n).
\end{equation}

\begin{example} \label{Examdes}
Consider the following $(4,5,2)$ code $\mathcal{C}$, and $\mathcal{C}' = \{{\bf c}_1, {\bf c}_2, {\bf c}_3\} \subseteq \mathcal{C}$.
\begin{eqnarray*}
\mathcal{C}&=& \left(
  \begin{array}{cccccc}
     {\bf c}_1&   {\bf c}_2 &  {\bf c}_3& {\bf c}_4& {\bf c}_5\\ \hline
     0 & 0 & 0  & 1 & 0\\
     1 & 1 & 0  & 0 & 0\\
     1 & 0 & 1  & 0 & 0\\
     0 & 0 & 1  & 1 & 0
  \end{array}
\right)
\end{eqnarray*}

According to Formula \eqref{def-desc}, we have that
\begin{equation*}
  \mathcal{C}'(1) = \{0\}, \ \mathcal{C}'(2) = \{0,1\}, \ \mathcal{C}'(3) = \{0, 1\}, \ \mathcal{C}'(4) = \{0,1\},
\end{equation*}
and
\begin{equation}\label{equ-exam-desc}
  {\sf desc}(\mathcal{C}') = \{0\} \times \{0,1\} \times \{0,1\} \times \{0,1\} =
  \{ {\bf c}_1, {\bf c}_2, {\bf c}_3, {\bf c}_5, (0,0,0,1)^{T}, (0,0,1,0)^{T}, (0,1,0,1)^{T}, (0,1,1,1)^{T}\}.
\end{equation}
\end{example}

Similar to the codes with totally secure  in  generic digital fingerprinting (see e.g. \cite{BS}),
we define the concept of an anti-averaging-collusion code for multimedia fingerprinting as follows.
\begin{definition}\label{def_AACC}
A binary code with a tracing algorithm $\mathcal{\phi}$ is called
\textit{$t$-anti-averaging-collusion code}, or $t$-AACC,
if $\mathcal{\phi}({\bf x})=\mathcal{C}_0$ holds for any $\mathcal{C}_0 \subseteq \mathcal{C}$ with $1 \leq |\mathcal{C}_0|\le t$
and ${\bf x}={\sf AT}(\mathcal{C}_0)$.
\end{definition}

According to  Definition \ref{def_AACC},
an AACC contains two important factors, i.e., a binary code and a tracing algorithm.
In contrast to prior research practices,
we propose the algorithm first and subsequently seek codes that fulfill its requirements
in the following two sections.
Through comparison, we find that this method yields codes with significantly higher code rates.

\begin{example} \label{ExamAACC}
We claim that the code $\mathcal{C}$  in Example \ref{Examdes}
with an algorithm $\mathcal{\phi}$ which will be described later is a $3$-AACC.
According to Definition \ref{def_AACC},
we need to show that $\mathcal{\phi}({\bf x})=\mathcal{C}_0$ holds for any $\mathcal{C}_0 \subseteq \mathcal{C}$ with $1 \leq |\mathcal{C}_0|\le 3$
and ${\bf x}={\sf AT}(\mathcal{C}_0)$.
For instance,  $\mathcal{C}_0 = \{{\bf c}_1$, ${\bf c}_2,{\bf c}_3\}$, then
${\bf x} = (0, \frac{2}{3},\frac{2}{3},\frac{1}{3})$
comes from Formulas \eqref{eq-generated-word-coord} and \eqref{eq-generated-word}.
Next, we describe the idea of the tracing algorithm.
\begin{itemize}
\item[1)] Determine the exact number of colluders $|\mathcal{C}_0|$ by using  the generated word ${\bf x}$.
  Observe that  ${\bf x}(3)=\frac{1}{3}$ and the fact that  $|\mathcal{C}_0|\leq 3$,
  we have that $|\mathcal{C}_0|=3$.
\item[2)] Trace colluders.
    \begin{itemize}
        \item[2-1)] The first iteration.
        \begin{itemize}
      \item Compute ${\sf desc}(\mathcal{C}_0)$ by ${\bf x}$.
  Let ${\bf R} = {\bf R}(1)\times {\bf R}(2)\times {\bf R}(3) \times {\bf R}(4)$, where
\begin{equation*}
 {\bf R}(i) =
\left\{\begin{array}{rl}
\{0\},   &   \mbox{if}  \ \ {\bf x}(i) =0,\\[2pt]
\{0,1\},    &   \mbox{if}  \ \ 0 < {\bf x}(i) < 1,\\[2pt]
\{1\},    &   \mbox{if}  \ \  {\bf x}(i) = 1.\\[2pt]
\end{array}
\right.
\end{equation*}
Then ${\bf R} = \{0\} \times \{0,1\}  \times \{0,1\}  \times \{0,1\}$.
It is not difficult to check that, for any $ i \in \{1,2,\ldots,n\}$, ${\bf R}(i) = \mathcal{C}_0(i)$ and
${\bf R}(i)$ reveals the elements of $i$th coordinate of all the colluders.
Hence ${\sf desc}(\mathcal{C}_0) = {\bf R}$.
  \item Delete all the innocent users who can not be framed by the colluder set $\mathcal{C}_0$.
     That is, for any $ i \in \{1,2,\ldots,n\}$ with ${\bf R}(i) = \{ 0 \}$ or ${\bf R}(i) = \{ 1 \}$,
     we can delete the codeword ${\bf c} \in \mathcal{C}$ such that ${\bf c}(i) \notin {\bf R}(i)$.
     Actually, we have deleted the codewords ${\bf c} \in \mathcal{C}$ such that ${\bf c} \notin {\sf desc}(\mathcal{C}_0)$,
     which implies that ${\bf c}$ is not a colluder since the colluder set $\mathcal{C}_0 \subseteq{\sf desc}(\mathcal{C}_0)$.
     So, the set of the rest codewords is in fact ${\sf desc}(\mathcal{C}_0)\cap \mathcal{C}$.
     According to the above discussion, ${\sf desc}(\mathcal{C}_0)\cap \mathcal{C} = \{{\bf c}_1, {\bf c}_2, {\bf c}_3, {\bf c}_5\}$.
  \item Determine colluders.
  For any $ i \in \{1,2,\ldots,n\}$ with ${\bf R}(i) = \{0,1\}$, find out one  codeword
     ${\bf c} \in {\sf desc}(\mathcal{C}_0)\cap \mathcal{C}$, such that
     \begin{equation}\label{eq-unique}
     {\bf c}(i) \neq {\bf c}'(i) \ \textrm{for any} \ {\bf c}' \in ({\sf desc}(\mathcal{C}_0)\cap \mathcal{C}) \setminus \{{\bf c}\}.
     \end{equation}
     Then ${\bf c}$ must be a colluder since the symbol ${\bf c}(i)$ in ${\bf R}(i)$ is certainly contributed by ${\bf c}$
     from the uniqueness in \eqref{eq-unique}.
     So ${\bf c}_3$ is identified  by using the condition  ${\bf R}(4)=\{0,1\}$ in this step.

        \end{itemize}
  \item[2-2)] The second iteration.
  \begin{itemize}
    \item Update generated word ${\bf x}' =\frac{|\mathcal{C}_0|}{|\mathcal{C}_0|-1}({\bf x} -
     \frac{1}{|\mathcal{C}_0|}{\bf c}_3) = (0, 1,\frac{1}{2},0).$
     We remark that ${\bf x}'$ is exactly the generated word by $\mathcal{C}_0 \setminus \{{\bf c}_3\}$,
     i.e., ${\bf x}' = {\sf AT}(\mathcal{C}_0 \setminus \{{\bf c}_3\})$.
     Such a condition is the key to the algorithm's ability to track all users.
      \item Compute ${\sf desc}(\mathcal{C}_0 \setminus \{{\bf c}_3\})$ by ${\bf x}'$.
      Similarly, we can obtain that  ${\bf R} = \{0\} \times \{1\}  \times \{0,1\}  \times \{0\}$.
      That is ${\sf desc}(\mathcal{C}_0 \setminus \{{\bf c}_3\})={\bf R}$.
      \item Delete all the innocent users.
      Similarly, we can obtain that
      ${\sf desc}(\mathcal{C}_0 \setminus \{{\bf c}_3\})\cap \mathcal{C} = \{{\bf c}_1, {\bf c}_2\}$.
      \item Determine colluders.
      Similarly, ${\bf c}_1$ and ${\bf c}_2$ are identified  by using  the condition ${\bf R}(3)=\{0,1\}$ in this step.
  \end{itemize}

  \end{itemize}
Following the two iterative phases,
  we know that ${\bf c}_1$, ${\bf c}_2$ and ${\bf c}_3$ are colluders.
  i.e, $\mathcal{\phi}({\bf x})=\mathcal{C}_0$.
\end{itemize}

For any other subset $\mathcal{C}_0\subseteq \mathcal{C}$ with $1 \leq |\mathcal{C}_0| \leq 3$,
we could use a similar way to check that $\mathcal{C}_0$ satisfies the above condition.
So $\mathcal{C}$ is a $3$-AACC.

\end{example}

One can immediately to derive the following result according to  Definition \ref{def_AACC}.
\begin{theorem}\label{th-AACC-tracing}
Any $t$-AACC  can be applied to identify all colluders under the assumption that
the number of colluders in the averaging attack is at most $t$.
\end{theorem}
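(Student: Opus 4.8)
The plan is to argue directly from Definition \ref{def_AACC}, since the statement is essentially a reformulation of the defining property of a $t$-AACC phrased in the operational language of the attack scenario. First I would fix a $t$-AACC, that is, a binary code $\mathcal{C}$ together with a tracing algorithm $\phi$ satisfying the condition in Definition \ref{def_AACC}. Then I would set up the adversarial situation: suppose an unknown set of colluders mounts an averaging attack, and denote by $\mathcal{C}_0 \subseteq \mathcal{C}$ the actual set of participating codewords. By the hypothesis of the theorem, the number of colluders is at most $t$, and since at least one user must participate, we have $1 \leq |\mathcal{C}_0| \leq t$.

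Next I would connect the physical attack to the combinatorial object. By the derivation recalled in Section \ref{sec-pre}, namely Formulas \eqref{eq-generated-word-coord} and \eqref{eq-generated-word}, the detector extracts from the pirated content exactly the generated word ${\bf x} = {\sf AT}(\mathcal{C}_0)$. Thus the only data available to the tracer is precisely a word of the form ${\bf x} = {\sf AT}(\mathcal{C}_0)$ for some subset $\mathcal{C}_0$ with $1 \leq |\mathcal{C}_0| \leq t$.

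Finally I would invoke the defining property. Because $\mathcal{C}$ (equipped with $\phi$) is a $t$-AACC, Definition \ref{def_AACC} guarantees that $\phi({\bf x}) = \mathcal{C}_0$ for every such $\mathcal{C}_0$ and its associated generated word ${\bf x}$. Hence running $\phi$ on the extracted word returns exactly $\mathcal{C}_0$, so every colluder is identified and no innocent user is wrongly accused, which is the claim.

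I expect no genuine obstacle here, as the entire content is already packed into Definition \ref{def_AACC}; the theorem merely translates the quantified set-recovery condition ``$\phi$ recovers every colluder set of size at most $t$'' into the operational statement ``all colluders can be identified.'' The single point worth stating explicitly is that the assumption ``at most $t$ colluders'' is exactly what forces $|\mathcal{C}_0|$ into the range $[1,t]$ covered by the definition, so the quantifier in the attack model and the quantifier in the definition coincide; this is why the implication is immediate.
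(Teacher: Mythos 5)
Your proposal is correct and matches the paper exactly: the paper offers no separate proof, stating only that the theorem follows immediately from Definition \ref{def_AACC}, which is precisely the unpacking you give. Your explicit connection between the extracted word and ${\sf AT}(\mathcal{C}_0)$ via Formulas \eqref{eq-generated-word-coord} and \eqref{eq-generated-word} is a sound elaboration of the same one-line argument.
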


\section{Soft Tracing Algorithm}
\label{sec-STA}

\subsection{Known Codes and Corresponding Tracing Algorithms}

Firstly, we list several known fingerprinting codes,
and the computational complexities of their corresponding tracing algorithms.
In the literature, it is known that these codes are equipped with decoding algorithms to trace back to all colluders.

\begin{definition} \rm(\cite{BS,CM,GVM,JCM})
\label{def-LACC-FPC-SSC}
Let $\mathcal{C}$ be an $(n,M,q)$ code and, $t$ and $L$ are positive integers with $2 \leq t \leq L\leq M$.
\begin{enumerate}

\item $\mathcal{C}$ is a \textit{$\overline{t}$-separable code}, or $\overline{t}$-SC$(n, M, q)$,
if for any distinct $\mathcal{C}_{1}, \mathcal{C}_{2}\subseteq \mathcal{C}$ with
 $1 \leq |\mathcal{C}_1|, |\mathcal{C}_2| \le t$,
 we have ${\sf desc}(\mathcal{C}_{1}) \neq $ ${\sf desc}(\mathcal{C}_{2})$.
\item  $\mathcal{C}$ is a \textit{$\overline{t}$-secure code with list decoding},
or $\overline{t}$-{\rm SCLD}$(n,M,q;L)$,
if for any distinct $\mathcal{C}_{1}, \mathcal{C}_{2}\subseteq \mathcal{C}$ with
 $1 \leq |\mathcal{C}_1|, |\mathcal{C}_2| \le t$,
we have  ${\sf desc}(\mathcal{C}_{1}) \neq $ ${\sf desc}(\mathcal{C}_{2})$ and
$|{\sf desc}(\mathcal{C}_{1}) \cap \mathcal{C} | \leq L$.
  \item $\mathcal{C}$ is a \textit{strongly $\overline{t}$-separable code},
  or $\overline{t}$-{\rm SSC}$(n,M,q)$, if for any
$\mathcal{C}_0 \subseteq \mathcal{C}$ with $1 \le |\mathcal{C}_0| \leq t$,
we have $\cap_{\mathcal{S} \in \mathcal{P}(\mathcal{C}_0)}\mathcal{S} = \mathcal{C}_0$,
where $\mathcal{P}(\mathcal{C}_0) = \{ \mathcal{S} \subseteq \mathcal{C} \ | \ {\sf desc}(\mathcal{S}) = {\sf desc}(\mathcal{C}_0) \}$.
\item $\mathcal{C}$ is a \textit{$t$-frameproof code}, or $t$-FPC$(n,M,q)$, if
for any $\mathcal{C}_0 \subseteq \mathcal{C}$ with $1 \leq |\mathcal{C}_0| \le t$,
we have ${\sf desc}(\mathcal{C}_0) \cap \mathcal{C} = \mathcal{C}_0$.
\end{enumerate}
\end{definition}

\begin{proposition}
\label{pro-knownAACCs}
\begin{enumerate}
 \item Any $\overline{t}$-SC$(n,M,2)$ with its corresponding tracing algorithm in \cite{CM}
  is a $t$-AACC. The computational complexity of the tracing algorithm is $O(nM^t)$.
  \item Any $\overline{t}$-{\rm SCLD}$(n,M,2;L)$, with its corresponding tracing algorithm in \cite{GVM}
  is a $t$-AACC. The computational complexity of the tracing algorithm is $O(\max\{nM,nL^t\})$.
   \item Any $\overline{t}$-SSC$(n,M,2)$ with its corresponding tracing algorithm in \cite{JCM}
  is a $t$-AACC. The computational complexity of the tracing algorithm is $O(nM)$.
  \item Any $t$-FPC$(n,M,2)$ with its corresponding tracing algorithm in \cite{CM}
  is a $t$-AACC. The computational complexity of the tracing algorithm is $O(nM)$.

\end{enumerate}
\end{proposition}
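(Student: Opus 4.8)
The plan is to prove all four parts through one common reduction: for a \emph{binary} code the generated word ${\bf x} = {\sf AT}(\mathcal{C}_0)$ carries exactly the same information as the descendant code ${\sf desc}(\mathcal{C}_0)$. First I would establish this bridge. By \eqref{eq-generated-word-coord} each coordinate satisfies ${\bf x}(i) = \frac{1}{|\mathcal{C}_0|}\sum_{{\bf c} \in \mathcal{C}_0}{\bf c}(i)$, and since every ${\bf c}(i) \in \{0,1\}$ we get ${\bf x}(i) = 0$ precisely when $\mathcal{C}_0(i) = \{0\}$, ${\bf x}(i) = 1$ precisely when $\mathcal{C}_0(i) = \{1\}$, and $0 < {\bf x}(i) < 1$ precisely when $\mathcal{C}_0(i) = \{0,1\}$. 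Hence the vector ${\bf R}$ built from ${\bf x}$ exactly as in Example \ref{ExamAACC} satisfies ${\bf R}(i) = \mathcal{C}_0(i)$ for all $i$, i.e.\ ${\bf R} = {\sf desc}(\mathcal{C}_0)$; one thus recovers ${\sf desc}(\mathcal{C}_0)$ from ${\bf x}$ in $O(n)$ time without knowing $|\mathcal{C}_0|$ in advance. With this in hand, verifying the condition of Definition \ref{def_AACC} reduces to showing that the referenced combinatorial algorithm, which takes ${\sf desc}(\mathcal{C}_0)$ as input, outputs $\mathcal{C}_0$ exactly, and I would treat the four families separately.

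For the two clean cases I would argue directly. In the FPC case (part 4), Definition \ref{def-LACC-FPC-SSC} gives ${\sf desc}(\mathcal{C}_0) \cap \mathcal{C} = \mathcal{C}_0$ outright, so after forming ${\bf R}$ the algorithm of \cite{CM} scans the $M$ codewords and keeps those ${\bf c}$ with ${\bf c}(i) \in {\bf R}(i)$ for every $i$, which is correct and costs $O(nM)$. In the SSC case (part 3), I would first note that every $\mathcal{S} \in \mathcal{P}(\mathcal{C}_0)$ lies inside $T := {\sf desc}(\mathcal{C}_0) \cap \mathcal{C}$ and that $T$ itself belongs to $\mathcal{P}(\mathcal{C}_0)$ (by monotonicity of ${\sf desc}$ together with $\mathcal{C}_0 \subseteq T \subseteq {\sf desc}(\mathcal{C}_0)$), so $T$ is the maximal element of $\mathcal{P}(\mathcal{C}_0)$. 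The intersection property $\cap_{\mathcal{S} \in \mathcal{P}(\mathcal{C}_0)}\mathcal{S} = \mathcal{C}_0$ then yields the operational criterion that ${\bf c} \in \mathcal{C}_0$ iff ${\sf desc}(T\setminus\{{\bf c}\}) \neq {\sf desc}(\mathcal{C}_0)$, which is equivalent to ${\bf c}$ owning a coordinate $i$ on which it is the unique survivor in $T$, exactly the uniqueness test \eqref{eq-unique}. A single coordinatewise count over $T$ realizes the algorithm of \cite{JCM} in $O(nM)$.

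For the two search-based cases I would use the separability axiom to guarantee uniqueness of the answer. In the SC case (part 1), since distinct subsets of size at most $t$ have distinct descendant codes, $\mathcal{C}_0$ is the \emph{only} subset of size $\le t$ whose descendant code equals the recovered ${\bf R}$; the algorithm of \cite{CM} enumerates all $O(M^t)$ such candidates, computes and compares each descendant code, and returns the unique match, for a total of $O(nM^t)$. In the SCLD case (part 2), the list-size bound $|{\sf desc}(\mathcal{C}_0)\cap\mathcal{C}| \le L$ confines the candidates to subsets of $T$, so after the $O(nM)$ scan producing $T$ the same enumeration runs over only $O(L^t)$ subsets, giving $O(\max\{nM, nL^t\})$; separability again secures uniqueness and hence correctness.

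The main obstacle, and the only genuinely non-formal step, is pinning down the bridge between the averaging-attack model and the descendant-code framework: one must verify that ${\bf R}$ equals ${\sf desc}(\mathcal{C}_0)$ on the nose rather than merely containing it, and that this holds uniformly for all $\mathcal{C}_0$ with $1 \le |\mathcal{C}_0| \le t$ as Definition \ref{def_AACC} demands. Once that equivalence is secured, correctness of each part follows from the respective defining property, and the stated complexities follow from a careful accounting of the candidate-enumeration step, the sharp contrast between the exponential $O(nM^t)$ of SC and the linear $O(nM)$ of FPC and SSC being precisely the phenomenon the paper sets out to exploit.
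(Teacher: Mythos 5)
Your proposal is correct and follows essentially the same route as the paper: recover ${\bf R}={\sf desc}(\mathcal{C}_0)$ coordinatewise from the binary generated word ${\bf x}$, then invoke the defining combinatorial property of each code class to show the referenced descendant-code algorithm returns exactly $\mathcal{C}_0$, with the enumeration argument for the SC case matching the paper's proof almost verbatim. The only difference is that the paper proves part 1 and dismisses the other three as ``similar,'' whereas you supply the details (including a correct unique-survivor criterion for the SSC case that anticipates the paper's later notion of $t$-uniqueness descendant code), so your write-up is, if anything, more complete.
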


\begin{IEEEproof} We only prove the first statement,
since the other three statements can be derived by a similar method.
Suppose that $\mathcal{C}$ is a $\overline{t}$-SC$(n,M,2)$.
For any $\mathcal{C}_0\subseteq \mathcal{C}$ such that $1 \leq | \mathcal{C}_0 | \leq t$,
let $\mathcal{C}_0$ be the set of all the colluders, and ${\bf x}= {\sf AT}(\mathcal{C}_0)$.
Let ${\bf R} = {\bf R}(1)\times {\bf R}(2)\times \cdots \times {\bf R}(n)$,
where
\begin{equation*}
 {\bf R}(i) =
\left\{\begin{array}{rl}
\{0\},   &   \mbox{if}  \ \ {\bf x}(i) =0,\\[2pt]
\{0,1\},    &   \mbox{if}  \ \ 0 < {\bf x}(i) < 1,\\[2pt]
\{1\},    &   \mbox{if}  \ \  {\bf x}(i) = 1.\\[2pt]
\end{array}
\right.
\end{equation*}
for any $i \in \{1,2,\ldots,n\}$.
One can directly check that ${\bf R}(i) = \mathcal{C}_0(i)$ holds for any $ i \in \{1,2,\ldots,n\}$.
Thus ${\bf R} = {\sf desc}(\mathcal{C}_0)$.
We now compute the descendent code of each subset with the size at most $t$ of $\mathcal{C}$,
and find out the subset $\mathcal{C}'$ such that ${\sf desc}(\mathcal{C}') = {\bf R}$.
Hence ${\sf desc}(\mathcal{C}') ={\sf desc}(\mathcal{C}_0)$.
According to the definition of an SC,
${\sf desc}(\mathcal{C}_1) \neq {\sf desc}(\mathcal{C}_2)$
for any distinct subsets $\mathcal{C}_1, \mathcal{C}_2 \subseteq \mathcal{C}$
with $1 \leq |\mathcal{C}_1|, |\mathcal{C}_2| \leq t$.
Thus $\mathcal{C}' = \mathcal{C}_0$.
That is, $\phi({\bf x})=\mathcal{C}_0$.
So  $\mathcal{C}$ is an AACC.
\end{IEEEproof}

Together with the results in \cite{GVM} and \cite{JCM},
we summarize the relationships among the above fingerprinting codes in Table \ref{tab-rela-known-codes}.
{\begin{table}[h]
\begin{center}
\caption{Relationships among AACC and different types of known fingerprinting codes}
\label{tab-rela-known-codes}
   \begin{tabular}{ccccccccc}
     &  & $t$-FPC$(n,M,q)$ &  $ \Longrightarrow $ & $\overline{t}$-SSC$(n,M,q)$  \\
     & & $ \Downarrow $  &  & $ \Downarrow $ \\
     & & SCLD$(n,M,q; L)$  & $ \Longrightarrow $ & $\overline{t}$-SC$(n,M,q)$
     &
     $ \overset{q=2}{\Longrightarrow}  $ & AACC$(n,M,2)$
  \end{tabular}
\end{center}
\end{table}}
\begin{remark}According to  Formula \eqref{eq-watermark},
only binary codes can be used to construct fingerprints resistant
to averaging attacks on multimedia contents under the spread-spectrum additive embedding.
However, directly constructing binary codes is a very difficult task.
The common approach we use is to first construct $q$-ary codes and then convert them
into binary codes through specialized methods.
For example, the simplest method involves concatenating the $q$-ary code with unit vectors,
which yields a binary code that preserves the properties of the original code.
Therefore, investigating $q$-ary codes is also an interesting work.
\end{remark}

We now give computational complexities of corresponding tracing algorithms for
different types of fingerprinting codes in Table \ref{tab-comp-konwn-codes}.

{\begin{table*}[h]
\center
\caption{Computational complexities of tracing algorithms of binary fingerprinting codes}
\label{tab-comp-konwn-codes}
  \begin{tabular}{|c|c|c|c|c|c|c|c|c|}
\hline
 & FPC &   SSC  &   SC &  SCLD   \\ \hline
\tabincell{c}{Complexity}
    &   $O(nM)$
    &   $O(nM)$
    &     $O(nM^t)$
    &   $O(\max\{nM,nL^t\})$ \\  \hline

Reference & \cite{Bla,CM} &  \cite{JCM}  &   \cite{CJM,CM} &   \cite{GVM}\\ \hline
\end{tabular}
\end{table*}

Finally, we list code rates of different types of codes.
Let $M_{\rm FPC}(t, n, q)$, $M_{\rm SSC}(\overline{t}, n, q)$, $M_{\rm SC}(\overline{t}, n, q)$,
$M_{\rm SCLD}(\overline{t}, n, q)$
denote the largest cardinality of a $q$-ary $t$-FPC, $\overline{t}$-SSC, $\overline{t}$-SC,
$\overline{t}$-SCLD of  length $n$, respectively.
Then we can denote their largest asymptotic code rates as
\begin{eqnarray*}
&&R_{\rm FPC}(t,n) = \limsup_{q \rightarrow \infty}\frac{\log_q M_{\rm FPC}(t, n, q)}{n},\\
&&R_{\rm SSC}(\overline{t},n)
  =\limsup_{q \rightarrow \infty}\frac{\log_q M_{\rm SSC}(\overline{t}, n, q)}{n},\\
&&R_{\rm SC}(\overline{t},n) =\limsup_{q \rightarrow \infty}\frac{\log_q M_{\rm SC}(\overline{t}, n, q)}{n},\\
&&R_{\rm SCLD}(\overline{t},n,L)
  =\limsup_{q \rightarrow \infty}\frac{\log_q M_{\rm SCLD}(\overline{t}, n, q;L)}{n}.
\end{eqnarray*}

 We list the state-of-the-art bounds about FPCs, SSCs, SCs, SCLDs
in Table \ref{tab-konwn-rates}.

{\begin{table*}[h]
\center
\caption{Code rates of different types of $q$-ary codes when $q\rightarrow \infty$}
\label{tab-konwn-rates}
  \begin{tabular}{|c|c|c|c|c|c|c|c|}
\hline
 & $R_{\rm FPC}(t,n)$
 &  $R_{\rm SSC}(\overline{t},n)$, $R_{\rm SC}(\overline{t},n)$ & $R_{\rm SCLD}(\overline{t},n;L)$ \\ \hline
\tabincell{c}{Code Rate}
    &   $=\frac{\lceil n/t \rceil}{n}$
    &  \tabincell{c}
    {$\leq
      \left\{\begin{array}{rl}
         \frac{\lceil 2n/3 \rceil}{n}, \  &   \mbox{if}  \ t=2,\\[2pt]
         \frac{\lceil n/(t-1) \rceil}{n},  &  \mbox{if}  \ t>2.\\[2pt]
             \end{array}
      \right.$}
    &\tabincell{c}
    {$\leq
      \left\{\begin{array}{rl}
         \frac{\lceil 2n/3 \rceil}{n}, \  &   \mbox{if}  \ t=2, L\geq 3,\\[2pt]
         \frac{\lceil n/(t-1) \rceil}{n},  &  \mbox{if}  \ t>2, L\geq t+1.\\[2pt]
             \end{array}
      \right.$}

 \\  \hline
Reference & \cite{Bla} & \cite{Bl2,JCM}  & \cite{GVM} \\ \hline
\end{tabular}
\end{table*}

\subsection{Soft Tracing Algorithm}

Next, we will introduce a new tracing algorithm called a soft tracing algorithm (Algorithm \ref{alg-soft-TA}).
Here, we only illustrate the ideas of our algorithms,
and we will establish the conditions for the algorithms' validity later.

\begin{itemize}
\item[1)] Algorithm \ref{alg-desc}:
 Suppose that $\mathcal{C}$ is an $(n, M, 2)$ code,
 $\mathcal{C}_0 \subseteq \mathcal{C}$ with $|\mathcal{C}_0|=t_0$,
 and ${\bf x}={\sf AT}(\mathcal{C}_0)$ being of the form in Formula (\ref{equa-gene-word}).
 When the input is  ${\bf x}$,
 we expect the algorithm to output the descendent code of $\mathcal{C}_0$, i.e., ${\bf R} = {\sf desc}(\mathcal{C}_0)$,
 where ${\bf R}$ is the output of the algorithm.


\item[2)] Algorithm \ref{alg-q-find-inter}:
Suppose that $\mathcal{C}$ is an $(n, M, q)$ code,
 and $\mathcal{C}_0 \subseteq \mathcal{C}$ with $|\mathcal{C}_0|=t_0$.
  When the input is ${\bf R}= {\sf desc}(\mathcal{C}_0)$,
  we expect the algorithm to output an index set of a subset of $\mathcal{C}_0$.
We  starts with the entire group as the suspicious set, i.e.,  $X=\{1,2,\ldots,M\}$ in Line $1$.
  Then we delete the index $j$ such that ${\bf c}_j \notin {\sf desc}(\mathcal{C}_0)$,
  and obtain the suspicious set $X=\{X(1),X(2), \ldots, X(|X|)\}$ in Line $9$.
Finally, by using Lines 9-32 of the algorithm,
  we will find out an index set $U \subseteq X$ satisfying that for any $h \in U$,
  there exists an integer $i\in \{1, 2, \ldots, n\}$ such that
   ${\bf c}_h(i)$ is a unique element in ${\bf R}(i)$,
i.e., ${\bf c}_h(i) \neq {\bf c}_{h'}(i)$ for any $h' \in X$.
In a word,  we expect that $\{{\bf c}_h \ | \ h \in U\}$  is a subset of  $\mathcal{C}_0$,
where $U$ is the output of the algorithm.
\item[3)]Algorithm \ref{alg-soft-TA}:
Suppose $\mathcal{C}$ is an $(n,M,2)$ code, $\mathcal{C}_0\subseteq \mathcal{C}$
and ${\bf x}={\sf AT}(\mathcal{C}_0)$ being of the form in Formula (\ref{equa-gene-word}).
When the input is ${\bf x}$,
we expect the algorithm to output $\mathcal{C}_0$.
We first determine the exact number of colluders,
i.e., we expect $t_0 = \max\{t_1, t_2, \ldots, t_{n}\}$ is equal to $|\mathcal{C}_0|$.
Next, the algorithm will enter the while loop.
During the first iteration,
Algorithm \ref{alg-desc} outputs the descendant code of $\mathcal{C}_0$,
i.e., ${\bf R}={\sf desc}(\mathcal{C}_0)$
where ${\bf R}$ is the set in Line $6$ of the algorithm.
Then Algorithm \ref{alg-q-find-inter} outputs an index set of
a subset $\mathcal{C}_1$ of $\mathcal{C}_0$.
In the second iteration, the generated word is updated, i.e.,
${\bf x}$ is in fact the generated word of $\mathcal{C}_0 \setminus \mathcal{C}_1$.
Similarly, Algorithm \ref{alg-q-find-inter} outputs an index set of a subset
$\mathcal{C}_2 \subseteq \mathcal{C}_0 \setminus \mathcal{C}_1$.
Repeat this process.
In the last iteration, Algorithm \ref{alg-q-find-inter} outputs an index set of a subset
$\mathcal{C}_s \subseteq \mathcal{C}_0 \setminus \cup_{i=1}^{s-1}\mathcal{C}_i$.
Now we expect that $\mathcal{C}_0 = \cup_{i=1}^{s}\mathcal{C}_i$.
In a word, we expect $\{{\bf c}_j \ | \ j \in U\}$ is equal to $\mathcal{C}_0$,
where $U$ is the output of the algorithm.
\end{itemize}

\IncMargin{2em}
\begin{algorithm}[H]
\caption{ {\tt DescAlg}}
\label{alg-desc}
\SetKwData{Left}{left}\SetKwData{This}{this}\SetKwData{Up}{up}
\SetKwFunction{Union}{Union}\SetKwFunction{FindCompress}{FindCompress}
\SetKwInOut{Input}{input}\SetKwInOut{Output}{output}

\Input {${\bf x} = ({\bf x}(1), {\bf x}(2), \ldots, {\bf x}(n))$}

\For { $i=1$  {\bf to }  $n$ }
     { \If {${\bf x}(i) = 0$}
           {  ${\bf R}(i) = \{0\}$\;}
       \Else
        {  \If {${\bf x}(i) = 1$}
               {  ${\bf R}(i) = \{1\}$\;}
           \Else
               {  ${\bf R}(i)= \{0, 1\}$\;}
        }
    }

\Output {${\bf R} = {\bf R}(1) \times {\bf R}(2) \times \ldots \times {\bf R}(n)$}

\end{algorithm}

\IncMargin{2em}
\begin{algorithm}[H]
\caption{ {\tt FindInterAlg}}
\label{alg-q-find-inter}
\SetKwData{Left}{left}\SetKwData{This}{this}\SetKwData{Up}{up}
\SetKwFunction{Union}{Union}\SetKwFunction{FindCompress}{FindCompress}
\SetKwInOut{Input}{input}\SetKwInOut{Output}{output}

\Input {${\bf R} = {\bf R}(1) \times {\bf R}(2) \times \ldots \times {\bf R}(n)$}

$X = \{1,2, \ldots, M\}$;

\For { $i=1$ {\bf to} $n$}
     {

         \For { $j=1$ {\bf to} $M$}
              {   \If { ${\bf c}_j(i)\notin {\bf R}(i)$  }
                      { $X = X \setminus \{j\}$;
                      }
              }
     }

Denote $X=\{X(1),X(2), \ldots, X(|X|)\}$;

$U=\emptyset$\;

\For { $i=1$ {\bf to} $n$}
     {
        \For { $k=0$ {\bf to} $q-1$}
             { $r_{i,k} = 0$;
             }

        \For { $j=1$ {\bf to} $|X|$}
             {  $r_{i,{\bf c}_{X(j)}(i)}= r_{i,{\bf c}_{X(j)}(i)}+1$;
             }

        $Y={\bf R}(i)$;

         \For { $k=0$ {\bf to} $q-1$}
        {
           \If {$r_{i,k} \neq 1$}
           {$Y=Y\setminus \{k\}$\;}
        }

        Denote $Y=\{Y(1),Y(2), \ldots, Y(|Y|)\}$;

        \For { $k=1$ {\bf to} $|Y|$}
        {
               {  \For { $j=1$ {\bf to} $|X|$}
                       {  \If {${\bf c}_{X(j)}(i)=Y(k)$ }
                              { $U=U \cup \{X(j)\}$\;
                              }
                       }
               }
        }
     }

\Output {the index set $U$}\label{End}

\end{algorithm}
\vskip 1cm

\IncMargin{2em}
\begin{algorithm}[H]
\caption{ {\tt Soft Tracing Algorithm}}
\label{alg-soft-TA}
\SetKwData{Left}{left}\SetKwData{This}{this}\SetKwData{Up}{up}
\SetKwFunction{Union}{Union}\SetKwFunction{FindCompress}{FindCompress}
\SetKwInOut{Input}{input}\SetKwInOut{Output}{output}

\Input {${\bf x} = ({\bf x}(1), {\bf x}(2), \ldots, {\bf x}(n))$}

Denote $t_0 = \max\{t_1, t_2, \ldots, t_{n}\}$\;

$U=\emptyset$;

$Flag=True$;

\While {$Flag$ and $|U| < t_0$}
    {   ${\bf x}=\frac{t_0{\bf x}-\sum_{j\in U}{\bf c}_{j}}{t_0-|U|}$\;

        Execute Algorithm \ref{alg-desc} with the input ${\bf x}$, and the output is ${\bf R}$\;

        Execute Algorithm \ref{alg-q-find-inter} with the input ${\bf R}$, and the output is $U'$\;

        \If {$U'=\emptyset$}
            {   $Flag = False$ \;
            }

        \Else
            {$U=U\cup U'$\;}
         }

\If {$|U| \neq t_0$}
    {
       \Output {       This code does not satisfy the conditions of the algorithm.  }
    }
\Else
    { \Output {the index set $U$}}

\end{algorithm}

\vskip 1cm

Now we will characterize the properties required for codes to be applicable to the soft tracing algorithm.
Suppose that  $\mathcal{C}$  is an $(n,M,q)$ code.
Then we say that the code $\mathcal{C}$ has {\it $t$-uniqueness descendant code},
if for any subcode $\mathcal{C}_0\subseteq \mathcal{C}$ with $1 \leq |\mathcal{C}_0|\le t$,
there exist a codeword ${\bf c} \in \mathcal{C}_0$ and an integer $i \in \{1,2,\ldots,n\}$
such that ${\bf c}(i) \neq {\bf c}'(i)$
for any ${\bf c}' \in ({\sf desc}(\mathcal{C}_0) \cap \mathcal{C}) \setminus \{{\bf c}\}$.

\begin{theorem}\label{th-soft-TA}
Suppose that an $(n,M,2)$ code $\mathcal{C}$ has $t$-uniqueness descendant code.
Then under the assumption that the number of colluders in the averaging attack is at most $t$,
the code $\mathcal{C}$ can be applied to identify all colluders by using the soft tracing algorithm (Algorithm \ref{alg-soft-TA}), and the computational complexity is $O(tnM)$.

\end{theorem}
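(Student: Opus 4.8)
The plan is to show that the soft tracing algorithm realizes a tracing map $\phi$ under which $\mathcal{C}$ becomes a $t$-AACC; the stated traceability then follows from Theorem~\ref{th-AACC-tracing}. Accordingly, I fix an arbitrary colluder set $\mathcal{C}_0 \subseteq \mathcal{C}$ with $1 \leq |\mathcal{C}_0| \leq t$, put ${\bf x} = {\sf AT}(\mathcal{C}_0)$, and aim to prove that Algorithm~\ref{alg-soft-TA} returns precisely the index set of $\mathcal{C}_0$.

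The first and most delicate step is to verify that the value $t_0 = \max\{t_1, \ldots, t_n\}$ computed in Line~1 equals $|\mathcal{C}_0|$. Writing $t' = |\mathcal{C}_0|$ and letting $k_i$ denote the number of colluders carrying a $1$ in coordinate $i$, Formula~\eqref{eq-generated-word-coord} gives ${\bf x}(i) = k_i/t'$, so after reduction $t_i = t'/\gcd(k_i, t')$ divides $t'$ and hence $t_0 \leq t'$. For the reverse inequality I would invoke the $t$-uniqueness descendant code hypothesis: it supplies a colluder ${\bf c} \in \mathcal{C}_0$ and a coordinate $i$ at which ${\bf c}(i) \neq {\bf c}'(i)$ for every ${\bf c}' \in ({\sf desc}(\mathcal{C}_0) \cap \mathcal{C}) \setminus \{{\bf c}\}$. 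Since $\mathcal{C}_0 \subseteq {\sf desc}(\mathcal{C}_0) \cap \mathcal{C}$, in this binary coordinate exactly one colluder carries ${\bf c}(i)$ and the other $t'-1$ carry its complement, forcing $k_i \in \{1, t'-1\}$; because $\gcd(1, t') = \gcd(t'-1, t') = 1$ we obtain $t_i = t'$, whence $t_0 = t' = |\mathcal{C}_0|$. This arithmetic bridge between the combinatorial uniqueness property and the denominators appearing in ${\bf x}$ is the crux of the proof.

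With $t_0 = |\mathcal{C}_0|$ established, the remaining steps are bookkeeping around a single loop invariant: at each entry to Algorithm~\ref{alg-desc}, ${\bf x}$ equals ${\sf AT}(\mathcal{C}_0 \setminus \{{\bf c}_j \mid j \in U\})$. The update ${\bf x} \leftarrow (t_0 {\bf x} - \sum_{j \in U}{\bf c}_j)/(t_0 - |U|)$ restores this invariant, since $t_0{\bf x} = \sum_{{\bf c} \in \mathcal{C}_0}{\bf c}$ yields $(\sum_{{\bf c} \in \mathcal{C}_0}{\bf c} - \sum_{j \in U}{\bf c}_j)/(|\mathcal{C}_0| - |U|) = {\sf AT}(\mathcal{C}_0 \setminus U)$ whenever $U$ indexes genuine colluders. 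Algorithm~\ref{alg-desc} then returns ${\bf R} = {\sf desc}(\mathcal{C}_0 \setminus U)$, because a binary averaged coordinate is $0$, $1$, or strictly between according as the active colluders are all $0$, all $1$, or mixed. Algorithm~\ref{alg-q-find-inter} next reduces $X$ to the indices of ${\sf desc}(\mathcal{C}_0 \setminus U) \cap \mathcal{C}$ and collects every codeword whose symbol in some coordinate occurs exactly once in this set; a short argument shows each such codeword lies in $\mathcal{C}_0 \setminus U$, as its unique symbol could only have been contributed by itself, so $U' \subseteq \mathcal{C}_0 \setminus U$, while the $t$-uniqueness property applied to the nonempty set $\mathcal{C}_0 \setminus U$ (of size at most $t$ while $|U| < t_0$) guarantees $U' \neq \emptyset$.

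Consequently $|U|$ strictly increases on every pass, the $Flag=False$ branch is never reached, and the loop halts exactly when $|U| = t_0 = |\mathcal{C}_0|$; since $U$ indexes only colluders throughout, the output is exactly $\mathcal{C}_0$, so $\phi({\bf x}) = \mathcal{C}_0$ and $\mathcal{C}$ is a $t$-AACC. For the complexity I would note that the loop executes at most $t_0 \leq t$ times, that determining $t_0$, the word update, and Algorithm~\ref{alg-desc} each cost $O(n)$ (respectively $O(nt)$) per pass, and that the dominant cost is Algorithm~\ref{alg-q-find-inter}, whose formation of $X$ and per-coordinate counting run in $O(nM)$ when $q=2$; multiplying by the $O(t)$ passes yields the claimed $O(tnM)$ bound. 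I expect the determination of $t_0$ to be the sole genuinely nontrivial point, everything downstream following directly from the invariant and the $t$-uniqueness hypothesis.
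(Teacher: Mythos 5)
Your proposal is correct and follows essentially the same route as the paper: it determines $|\mathcal{C}_0|$ from the denominators of ${\bf x}$ via the uniqueness property (the paper's Proposition~\ref{propo-colludersNB}), maintains the invariant that the updated ${\bf x}$ is ${\sf AT}$ of the remaining colluders, and uses the paper's Proposition~\ref{propo-desc} and Lemma~\ref{lem-find-inter} arguments to show each pass extracts a nonempty subset of genuine colluders, with the same $O(tnM)$ accounting. The only (welcome) refinement is that you make explicit the divisibility argument $t_i \mid |\mathcal{C}_0|$ giving $t_0 \leq |\mathcal{C}_0|$, which the paper leaves implicit.
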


We will establish the above statement.
In fact, the idea of Algorithm \ref{alg-soft-TA} is the same as the algorithm in Example \ref{ExamAACC},
which implies that we first need to know the exact number of the colluders.
It is worth noting that the property of ``uniqueness" of ${\bf c}$
is very useful to prove Theorem \ref{th-soft-TA}.
Precisely, the property of ``uniqueness"  is not only useful for
determining  the exact number of the colluders,
but also useful for identifying the colluders.

We first determine the exact number of the colluders.
Suppose that  $\mathcal{C}$ is an $(n,M,2)$ code having $t$-uniqueness descendant code,
and $\mathcal{C}_0 = \{ {\bf c}_1, {\bf c}_2, \ldots, {\bf c}_{t_0}\} \subseteq \mathcal{C}$
is exactly the set of all the colluders, where $t_0 \leq t$. Let
${\bf x} = {\sf AT}(\mathcal{C}_0)$ be of the form in Formula (\ref{equa-gene-word}).
According to Formula \eqref{eq-generated-word-coord},
we know that ${\bf x}(i) = \frac{1}{t_0} \sum_{j = 1}^{t_0} {\bf c}_j(i)$ for any $i \in \{1,2,\ldots,n\}$.
Since $\mathcal{C}$ has $t$-uniqueness descendant code,
there exists a codeword ${\bf c} \in \mathcal{C}_0$ and an integer $i \in \{1,2,\ldots,n\}$
such that ${\bf c}(i) \neq {\bf c}'(i)$
for any ${\bf c}' \in ({\sf desc}(\mathcal{C}_0) \cap \mathcal{C}) \setminus \{{\bf c}\}$.
This implies that ${\bf c}(i) \neq {\bf c}'(i)$ for any ${\bf c}' \in \mathcal{C}_0\setminus \{{\bf c}\}$
as $\mathcal{C}_0\setminus \{{\bf c}\}  \subseteq ({\sf desc}(\mathcal{C}_0) \cap \mathcal{C})\setminus \{{\bf c}\}$.
\begin{itemize}
  \item If ${\bf c}(i) = 1$,  we have ${\bf c}'(i) = 0$ for any ${\bf c}' \in \mathcal{C}_0 \setminus \{{\bf c}\}$.
     Then ${\bf x}(i) = \frac{1}{t_0}$.
  \item If ${\bf c}(i) = 0$,  we have ${\bf c}'(i) = 1$ for any ${\bf c}' \in \mathcal{C}_0 \setminus \{{\bf c}\}$.
     Then ${\bf x}(i) = \frac{t_0 -1}{t_0}$.
\end{itemize}
So the denominator of ${\bf x}(i)$ is exactly equal to $t_0$, which is in fact the maximum number of $t_1, t_2, \ldots, t_n$.

\begin{proposition}
\label{propo-colludersNB}
Suppose that  $\mathcal{C}$ is an $(n, M, q)$ code with $t$-uniqueness descendant code,
$\mathcal{C}_0\subseteq \mathcal{C}$ with $1 \leq |\mathcal{C}_0|\le t$,
and  ${\bf x} = {\sf AT}(\mathcal{C}_0)$ being of the form in Formula (\ref{equa-gene-word}).
Then $|\mathcal{C}_0| = \max\{t_i  \ | \  i \in \{1,2,\ldots,n\}\}$.
\end{proposition}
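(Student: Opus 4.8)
The plan is to prove the two inequalities $\max_i t_i \le |\mathcal{C}_0|$ and $\max_i t_i \ge |\mathcal{C}_0|$ separately, writing $t_0 = |\mathcal{C}_0|$ throughout. For the upper bound I would start from Formula \eqref{eq-generated-word-coord}, which gives ${\bf x}(i) = \frac{1}{t_0}\sum_{j=1}^{t_0}{\bf c}_j(i)$ for every $i$. Writing $s_i = \sum_{j}{\bf c}_j(i) \in \mathbb{N}$, the identity $\frac{s_i}{t_0} = \frac{a_i}{t_i}$ from Formula \eqref{equa-gene-word} together with $\gcd(a_i,t_i)=1$ forces $t_i \mid t_0$, hence $t_i \le t_0$ for every $i$ and therefore $\max_i t_i \le t_0$ (the coordinates with ${\bf x}(i)=0$ give $t_i=1$ and are harmless).

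For the lower bound it suffices to exhibit a single coordinate $i$ with $t_i = t_0$. Here I would invoke the hypothesis that $\mathcal{C}$ has $t$-uniqueness descendant code: applied to the colluder set $\mathcal{C}_0$ (legal since $1 \le t_0 \le t$), it produces a codeword ${\bf c}\in\mathcal{C}_0$ and a coordinate $i$ with ${\bf c}(i)\neq{\bf c}'(i)$ for all ${\bf c}'\in({\sf desc}(\mathcal{C}_0)\cap\mathcal{C})\setminus\{{\bf c}\}$. Since $\mathcal{C}_0\setminus\{{\bf c}\}\subseteq({\sf desc}(\mathcal{C}_0)\cap\mathcal{C})\setminus\{{\bf c}\}$, this passes down to ${\bf c}(i)\neq{\bf c}'(i)$ for every other colluder ${\bf c}'$. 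In the binary case this is decisive: ${\bf c}(i)$ being distinct from the common symbol of the remaining $t_0-1$ colluders forces $s_i\in\{1,\,t_0-1\}$, so ${\bf x}(i)\in\{\frac{1}{t_0},\,\frac{t_0-1}{t_0}\}$; as $\gcd(1,t_0)=\gcd(t_0-1,t_0)=1$, both fractions are already reduced and $t_i=t_0$. Combining the two bounds yields $\max_i t_i=t_0=|\mathcal{C}_0|$.

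I expect the lower bound to be the main obstacle, specifically the step that converts the purely set-theoretic ``uniqueness'' of ${\bf c}(i)$ into the arithmetic statement that the reduced denominator equals $t_0$. Over the binary alphabet this is immediate, because ``${\bf c}(i)$ differs from every other colluder'' means the others all carry the complementary bit, pinning $s_i$ to $1$ or $t_0-1$ and guaranteeing coprimality with $t_0$. For a general alphabet the same uniqueness only asserts that the remaining colluders avoid the symbol ${\bf c}(i)$, not that they agree, so $s_i$ need not be coprime to $t_0$ and the clean conclusion can fail; I would therefore confine the decisive step to $q=2$ (which is consistent with the binary setting in which the soft tracing algorithm operates) or else supply an extra argument locating a coordinate whose coordinate sum is coprime to $t_0$. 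Establishing this coprimality is the crux of the proof and is where I would concentrate the care.
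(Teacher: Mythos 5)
Your proposal is correct and follows essentially the same route as the paper: the decisive step in both is to invoke the $t$-uniqueness property on $\mathcal{C}_0$ itself to produce a coordinate $i$ where ${\bf x}(i)=\frac{1}{t_0}$ or $\frac{t_0-1}{t_0}$, already in lowest terms, so that $t_i=t_0$. Your explicit divisibility argument $t_i \mid t_0$ for the upper bound, and your observation that the decisive step genuinely requires $q=2$, are both sound refinements of points the paper leaves implicit (its own proof silently restricts to an $(n,M,2)$ code despite the proposition being stated for general $q$).
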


According to Proposition \ref{propo-colludersNB},
the exact number of colluders can be determined by the observed vector ${\bf x}$
if a code has $t$-uniqueness descendant code.

\begin{proposition}
\label{propo-desc}
Suppose that $\mathcal{C}$ is an $(n,M,2)$ code, $\mathcal{C}_0\subseteq \mathcal{C}$,
and ${\bf x} = {\sf AT}(\mathcal{C}_0)$ being of the form in Formula (\ref{equa-gene-word}).
Then the output ${\bf R}$ of Algorithm \ref{alg-desc} is equal to ${\sf desc}(\mathcal{C}_0)$
if the input is ${\bf x}$.
The computational complexity is $O(n)$.
\end{proposition}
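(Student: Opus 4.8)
The plan is to prove correctness coordinate by coordinate: I will show that the set ${\bf R}(i)$ produced by Algorithm \ref{alg-desc} coincides with $\mathcal{C}_0(i)$, the set of $i$-th components of the codewords in $\mathcal{C}_0$, for every $i \in \{1,2,\ldots,n\}$. Since both ${\bf R}$ and ${\sf desc}(\mathcal{C}_0)$ are defined as the Cartesian product of their coordinate sets (the latter via Formula \eqref{def-desc}), establishing the coordinatewise equality immediately yields ${\bf R} = {\sf desc}(\mathcal{C}_0)$. As a preliminary step I would record that, because $\mathcal{C}$ is binary, every entry ${\bf c}_j(i)$ lies in $\{0,1\}$; hence by Formula \eqref{eq-generated-word-coord} each coordinate ${\bf x}(i) = \frac{1}{t_0}\sum_{j=1}^{t_0}{\bf c}_j(i)$ is an average of $0$'s and $1$'s, so $0 \le {\bf x}(i) \le 1$, where $t_0 = |\mathcal{C}_0|$.

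The core of the argument is a three-way case analysis on ${\bf x}(i)$, matching the three branches of Algorithm \ref{alg-desc}. When ${\bf x}(i) = 0$, the average of nonnegative integers being zero forces ${\bf c}_j(i) = 0$ for every $j$, so $\mathcal{C}_0(i) = \{0\}$, exactly the branch setting ${\bf R}(i) = \{0\}$. Symmetrically, ${\bf x}(i) = 1$ forces $\sum_{j=1}^{t_0}{\bf c}_j(i) = t_0$, hence ${\bf c}_j(i) = 1$ for all $j$, giving $\mathcal{C}_0(i) = \{1\} = {\bf R}(i)$. Finally, when $0 < {\bf x}(i) < 1$, the sum $\sum_{j=1}^{t_0}{\bf c}_j(i)$ is neither $0$ nor $t_0$, so at least one codeword contributes a $0$ and at least one contributes a $1$ in coordinate $i$; therefore $\mathcal{C}_0(i) = \{0,1\} = {\bf R}(i)$. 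In every case ${\bf R}(i) = \mathcal{C}_0(i)$.

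Having established ${\bf R}(i) = \mathcal{C}_0(i)$ for all $i$, I would conclude from Formula \eqref{def-desc} that
\[
{\bf R} = {\bf R}(1) \times \cdots \times {\bf R}(n) = \mathcal{C}_0(1) \times \cdots \times \mathcal{C}_0(n) = {\sf desc}(\mathcal{C}_0),
\]
as claimed. For the complexity, the algorithm consists of a single loop over the $n$ coordinates, each iteration performing a constant number of comparisons to select one of the three branches, so the total cost is $O(n)$.

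I do not anticipate a genuine obstacle here: the statement is a direct correctness check, and its only substantive content is that the binary alphabet makes the value of a coordinate-average determine the set of contributing symbols. The one point requiring slight care is verifying that the three cases $\{{\bf x}(i)=0,\ 0<{\bf x}(i)<1,\ {\bf x}(i)=1\}$ are exhaustive, which is guaranteed precisely by the bound $0 \le {\bf x}(i) \le 1$ noted at the outset and would fail for a non-binary alphabet.
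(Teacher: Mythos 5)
Your proof is correct and follows essentially the same route as the paper's: a coordinatewise three-way case analysis on ${\bf x}(i)$ showing ${\bf R}(i) = \mathcal{C}_0(i)$ via Formula \eqref{eq-generated-word-coord}, then taking the Cartesian product. You simply spell out the arithmetic (why a zero or unit average of binary entries forces all entries equal) and the exhaustiveness of the three cases a bit more explicitly than the paper does.
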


\begin{proof} We can directly check that  for any $i \in \{1,2,\ldots,n\}$,
the following conditions hold by Formula \eqref{eq-generated-word-coord} and Algorithm  \ref{alg-desc}.
  \begin{itemize}
    \item  ${\bf R}(i) = \{0\}$ if and only if ${\bf c}'(i) = 0$ for any ${\bf c}' \in \mathcal{C}_0$.
    \item  ${\bf R}(i) = \{1\}$ if and only if ${\bf c}'(i) = 1$ for any ${\bf c}' \in \mathcal{C}_0$.
    \item  ${\bf R}(i) = \{0,1\}$ if and only if there exist ${\bf c}', {\bf c}'' \in \mathcal{C}_0$
         such that ${\bf c}'(i) = 0$ and ${\bf c}''(i) = 1$.
  \end{itemize}
  According to the above discussions, we know that ${\bf R} = {\sf desc}(\mathcal{C}_0)$.
\end{proof}

In order to prove Theorem \ref{th-soft-TA}, the following lemma is needed.

\begin{lemma}
\label{lem-find-inter}
Suppose that $\mathcal{C}$ is an $(n, M, q)$ code with $t$-uniqueness descendant code,
and $\mathcal{C}_0 \subseteq \mathcal{C}$ with $1 \leq |\mathcal{C}_0|\leq t$.
Then  Algorithm \ref{alg-q-find-inter} outputs an index set of a non-empty subset of $\mathcal{C}_0$
if the input is ${\sf desc}(\mathcal{C}_0)$,
that is $U \neq \emptyset$ and $\{{\bf c}_j \ | \ j \in U\} \subseteq \mathcal{C}_0$,
where $U$ is the output of Algorithm \ref{alg-q-find-inter}.
In addition, the computational complexity is $O(\min\{t, q\}nM)$.
\end{lemma}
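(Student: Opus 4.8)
The plan is to first interpret precisely what Algorithm \ref{alg-q-find-inter} computes at each stage, then verify the two membership assertions ($U \neq \emptyset$ and soundness), and finally count operations. The starting observation is that after the first nested loop an index $j$ survives in $X$ exactly when ${\bf c}_j(i) \in {\bf R}(i)$ for every $i$; since the hypothesis ${\bf R} = {\sf desc}(\mathcal{C}_0)$ gives ${\bf R}(i) = \mathcal{C}_0(i)$ for all $i$, this says $\{{\bf c}_j \mid j \in X\} = {\sf desc}(\mathcal{C}_0) \cap \mathcal{C}$, and in particular $\mathcal{C}_0 \subseteq \{{\bf c}_j \mid j \in X\}$. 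I would then read off that, for each coordinate $i$, the counter $r_{i,k}$ records how many codewords indexed by $X$ carry symbol $k$ in position $i$, so the set $Y$ consists exactly of the symbols of ${\bf R}(i)$ occurring in position $i$ of precisely one codeword of $\{{\bf c}_j \mid j \in X\}$, and $U$ collects every index realising such a uniquely occurring symbol in some coordinate.

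For soundness, i.e.\ $\{{\bf c}_j \mid j \in U\} \subseteq \mathcal{C}_0$, I would argue locally. Suppose $X(j)$ enters $U$ through coordinate $i$, so that $k := {\bf c}_{X(j)}(i) \in Y$; then $k \in {\bf R}(i) = \mathcal{C}_0(i)$ and $r_{i,k} = 1$. From $k \in \mathcal{C}_0(i)$ there is a colluder ${\bf d} \in \mathcal{C}_0$ with ${\bf d}(i) = k$; but ${\bf d} \in \mathcal{C}_0 \subseteq {\sf desc}(\mathcal{C}_0)\cap\mathcal{C}$ means its index lies in $X$, and $r_{i,k}=1$ then forces ${\bf d} = {\bf c}_{X(j)}$. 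Hence ${\bf c}_{X(j)} \in \mathcal{C}_0$, which is the claim.

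For non-emptiness I would invoke the hypothesis that $\mathcal{C}$ has $t$-uniqueness descendant code. Since $1 \le |\mathcal{C}_0| \le t$, there exist ${\bf c} \in \mathcal{C}_0$ and a coordinate $i$ with ${\bf c}(i) \neq {\bf c}'(i)$ for every ${\bf c}' \in ({\sf desc}(\mathcal{C}_0)\cap\mathcal{C})\setminus\{{\bf c}\}$. The index of ${\bf c}$ lies in $X$, and this inequality says no other index of $X$ shares ${\bf c}$'s symbol at position $i$, so $r_{i,{\bf c}(i)} = 1$; moreover ${\bf c}(i) \in \mathcal{C}_0(i) = {\bf R}(i)$, so ${\bf c}(i) \in Y$ and the index of ${\bf c}$ is placed into $U$. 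Thus $U \neq \emptyset$, and combined with the previous paragraph $\emptyset \neq \{{\bf c}_j \mid j \in U\} \subseteq \mathcal{C}_0$.

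Finally, for the complexity I would count the two phases: the first loop costs $O(nM)$, and in the second loop each of the $n$ coordinates needs $O(q)$ to initialise and filter counters, $O(|X|) = O(M)$ to tally, and $O(|Y|\cdot|X|)$ for the concluding double loop. The one step I expect to be genuinely delicate — indeed the only nontrivial point in the whole argument, the rest being bookkeeping about what the loops compute — is extracting the sharp factor $|Y| \le \min\{t,q\}$: this follows because on one hand $Y \subseteq \{0,\dots,q-1\}$, while on the other $Y \subseteq {\bf R}(i) = \mathcal{C}_0(i)$ with $|\mathcal{C}_0(i)| \le |\mathcal{C}_0| \le t$. Each coordinate therefore contributes $O(\min\{t,q\}\,M)$, and summing over $n$ coordinates yields the stated bound $O(\min\{t,q\}\,nM)$.
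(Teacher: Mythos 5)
Your proof is correct and follows essentially the same route as the paper's: identify $X$ with the index set of ${\sf desc}(\mathcal{C}_0)\cap\mathcal{C}$, invoke the $t$-uniqueness descendant code property to obtain $U\neq\emptyset$, and bound $|Y|\leq\min\{t,q\}$ via $Y\subseteq{\bf R}(i)=\mathcal{C}_0(i)$ to get the complexity. The only cosmetic difference is that you prove soundness directly (the uniquely occurring symbol lies in $\mathcal{C}_0(i)$, hence must be contributed by a colluder, which uniqueness forces to be ${\bf c}_{X(j)}$ itself), whereas the paper phrases the same argument as a contradiction.
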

\begin{IEEEproof} Firstly, consider the computational complexity.
Since the input is
${\sf desc}(\mathcal{C}_0) = \mathcal{C}_0(1)\times \mathcal{C}_0(2) \times \cdots \times \mathcal{C}_0(n)$
we have ${\bf R}(i) = \mathcal{C}_0(i)$ for any $i \in \{1,2,\ldots,n\}$.
This implies that $|{\bf R}(i)| = |\mathcal{C}_0(i)| \leq \min\{t,q\}$.
In addition, according to Line $18$, we have $Y={\bf R}(i)$ which implies that  $|Y| =|{\bf R}(i)| \leq \min\{t,q\}$.
Therefore, the computational complexity of Algorithm \ref{alg-q-find-inter} is $O(\min\{t,q\}nM)$.

Since ${\bf R} = {\sf desc}(\mathcal{C}_0)$ and $X$ in Line $1$ of Algorithm \ref{alg-q-find-inter}
can be regarded as the set of subscripts of the codewords in $\mathcal{C}$,
Lines $1-9$ of Algorithm \ref{alg-q-find-inter} is to find out all the codewords in ${\sf desc}(\mathcal{C}_0) \cap \mathcal{C}$,
i.e.,
\begin{eqnarray}
\label{eq-left}
{\sf desc}(\mathcal{C}_0) \cap \mathcal{C} = \{{\bf c}_j \ | \ j \in X\},
\end{eqnarray}
where $X$ is the set in Line $9$ of Algorithm \ref{alg-q-find-inter}.

Based on the above fact, according to Lines $12-17$ of Algorithm \ref{alg-q-find-inter},
we know that $r_{i,k}$, $i \in \{1,2,\ldots,n\}$ and $k \in \{0,1,\ldots,q-1\}$,
is the numbers of occurrences of the elements $k$
in ${\bf c}_{X(1)}(i), {\bf c}_{X(2)}(i), \ldots, {\bf c}_{X(|X|)}(i)$.
According to Lines $18-24$ of Algorithm \ref{alg-q-find-inter},
the set $Y$ in Line $24$ is a subset of $Q=\{0,1, \ldots, q-1\}$ such that
each element  $k \in Y$ occurs exactly once in ${\bf c}_{X(1)}(i), {\bf c}_{X(2)}(i), \ldots, {\bf c}_{X(|X|)}(i)$.
Therefore, according to Lines $25-32$ of Algorithm \ref{alg-q-find-inter},
the output $U$ is a subset of $X$ satisfying  that for any $h\in U$,
there exists an integer $i \in \{1,2,\ldots,n\}$ such that
${\bf c}_h(i) \neq {\bf c}_{h'}(i)$ for any $h' \in X$.
Together with the  fact in Formula \eqref{eq-left},
we know that for any $h\in U$,
there exists an integer $i \in \{1,2,\ldots,n\}$ such that
${\bf c}_h(i) \neq {\bf c}'(i)$ for any ${\bf c}' \in ({\sf desc}(\mathcal{C}_0) \cap \mathcal{C})\setminus \{{\bf c}_h\}$.
Then we can obtain that ${\bf c}_h \in \mathcal{C}_0$.
Otherwise, if ${\bf c}_h \notin \mathcal{C}_0$,
then ${\bf c}_h(i) \notin \mathcal{C}_0(i)$ according to the uniqueness of ${\bf c}_h(i)$.
Thus $({\sf desc}(\mathcal{C}_0)\cap \mathcal{C})(i) \neq \mathcal{C}_0(i)$.
This is a contradiction. So we have showed that $\{{\bf c}_j \ | \ j \in U\} \subseteq \mathcal{C}_0$.

Finally, it suffices to show that  $U$ is not an emptyset.
Since code $\mathcal{C}$ has $t$-uniqueness descendant code,
there exists a codeword ${\bf c}_h \in \mathcal{C}_0$ and $i\in \{1,2,\ldots,n\}$, such that
\begin{equation}\label{eq-unique-word}
{\bf c}_h(i) \neq {\bf c}'(i) \
\textrm{for any} \  {\bf c}' \in ({\sf desc}(\mathcal{C}_0) \cap \mathcal{C}) \setminus \{{\bf c}_h\}.
\end{equation}
This implies that $h \in U$, i.e., $U$ is not an emptyset.
\end{IEEEproof}

\textit{Proof of Theorem \ref{th-soft-TA}: }\
We need to show that for any $\mathcal{C}_0 \subseteq \mathcal{C}$ with $1 \leq |{\mathcal{C}_0}| =t_0 \leq t$
and ${\bf x} = {\sf AT}({\mathcal{C}_0})$ being of the form in Formula (\ref{equa-gene-word}),
the following two conditions are satisfied:
\begin{itemize}
  \item $\phi({\bf x})=\mathcal{C}_0$, where $\mathcal{\phi}$ is Algorithm \ref{alg-soft-TA}.
  \item The computational complexity of Algorithm \ref{alg-soft-TA} is $O(tnM)$.
\end{itemize}

Firstly, consider the computational complexity.
According to Proposition \ref{propo-desc} and Lemma \ref{lem-find-inter},
the computational complexities of  Algorithm \ref{alg-desc} and Algorithm \ref{alg-q-find-inter}
are $O(n)$ and $O(\min\{t,2\}nM)$, respectively.
Together with the fact that $t_0\leq t$,
one can derive that the computational complexity of Algorithm \ref{alg-soft-TA} is $O(tnM)$.

Secondly, we will show that $\phi({\bf x})=\mathcal{C}_0$.
Since $\mathcal{C}$ has $t$-uniqueness descendant code,
according to Proposition \ref{propo-colludersNB},
we know that $t_0 = |{\mathcal{C}_0}|$,
where $t_0$ is the number in Line $1$ of Algorithm \ref{alg-soft-TA}.
The algorithm below will perform iterations.

In the first iteration,
since ${\bf x} = {\sf AT}(\mathcal{C}_0)$,
according to Proposition \ref{propo-desc},
we have ${\bf R} = {\sf desc}(\mathcal{C}_0)$,
where ${\bf R}$ is the set in Line $6$ of Algorithm \ref{alg-soft-TA},
i.e.,  ${\bf R}$ is the output of Algorithm \ref{alg-desc}.
Since $\mathcal{C}$ has $t$-uniqueness descendant code,
according to Lemma \ref{lem-find-inter},
one can obtain that
\begin{eqnarray*}
U' \neq \emptyset \ \mbox{and} \ \mathcal{C}_1=\{{\bf c}_j \ | \ j\in U'\} \subseteq \mathcal{C}_0.
\end{eqnarray*}
where $U'$ is the set in Line $7$ of Algorithm \ref{alg-soft-TA},
i.e.,  $U'$ is the output of Algorithm \ref{alg-q-find-inter}.
In summary, during the first iteration,
one can obtain an index set of  non-empty subset $\mathcal{C}_1$ of $\mathcal{C}_0$
by using the generated word ${\bf x}={\sf AT}(\mathcal{C}_0)$.
\begin{itemize}
  \item If $|U|=t_0$, we can know that $|\mathcal{C}_1|=t_0=|\mathcal{C}_0|$.
Thus $\mathcal{C}_1=\mathcal{C}_0$ as $\mathcal{C}_1\subseteq \mathcal{C}_0$.
That is $\{{\bf c}_j \ | \ j \in U\}=\mathcal{C}_0$.
Therefore,   $\phi({\bf x})=\mathcal{C}_0$.
\item If $|U|< t_0$, then the algorithm proceeds to the second iteration.
\end{itemize}

In the second iteration, one can directly check that
the updated generated word ${\bf x}$ in the left-hand side of the equation  in Line $5$
is the generated word of $\mathcal{C}_0 \setminus \mathcal{C}_1$, i.e.,  ${\bf x} = {\sf AT}(\mathcal{C}_0 \setminus \mathcal{C}_1)$.
Similar to the first iteration, we have
\begin{eqnarray*}
U' \neq \emptyset \ \mbox{and} \ \mathcal{C}_2=\{{\bf c}_j \ | \ j\in U'\} \subseteq
\mathcal{C}_0 \setminus \mathcal{C}_1.
\end{eqnarray*}
where $U'$ is the output of Algorithm \ref{alg-q-find-inter}.
\begin{itemize}
  \item If $|U|=t_0$, we can know that $|\mathcal{C}_2| +|\mathcal{C}_1| = t_0 = |\mathcal{C}_0|$,
  i.e., $|\mathcal{C}_2|=|\mathcal{C}_0|-|\mathcal{C}_1|$.
Thus $\mathcal{C}_2=\mathcal{C}_0 \setminus \mathcal{C}_1$
as $\mathcal{C}_2\subseteq\mathcal{C}_0  \setminus \mathcal{C}_1$.
Thus $\mathcal{C}_1 \cup \mathcal{C}_2 = \mathcal{C}_0$,
i.e., $\{{\bf c}_j \ | \ j \in U\}=\mathcal{C}_0$.
Therefore,  $\phi({\bf x})=\mathcal{C}_0$.
\item If $|U|< t_0$, then the algorithm proceeds to the next iteration.
\end{itemize}

Repeat this process.

In the last iteration,  we can obtain that
the updated generated word ${\bf x}$ in the left-hand side of the equation  in Line $5$
is the generated word of $\mathcal{C}_0 \setminus \cup_{i=1}^{s-1} \mathcal{C}_i$, i.e.,
${\bf x} = {\sf AT}(\mathcal{C}_0 \setminus \cup_{i=1}^{s-1} \mathcal{C}_i)$.
Thus
\begin{eqnarray*}
U' \neq \emptyset \ \mbox{and} \ \mathcal{C}_s=\{{\bf c}_j \ | \ j\in U'\} \subseteq
 \mathcal{C}_0 \setminus \cup_{i=1}^{s-1}\mathcal{C}_i.
\end{eqnarray*}
where $U'$ is the output of Algorithm \ref{alg-q-find-inter}.
Since this is the last iteration, $|U|=t_0$ must hold.
Then $|\mathcal{C}_s| +|\cup_{i=1}^{s-1}\mathcal{C}_i| = t_0 = |\mathcal{C}_0|$,
  i.e., $|\mathcal{C}_s|=|\mathcal{C}_0|-|\cup_{i=1}^{s-1}\mathcal{C}_i|$.
Thus $\mathcal{C}_s=\mathcal{C}_0 \setminus \cup_{i=1}^{s-1}\mathcal{C}_i$
as $\mathcal{C}_2\subseteq\mathcal{C}_0  \setminus \cup_{i=1}^{s-1}\mathcal{C}_i$.
Thus $\cup_{i=1}^{s}\mathcal{C}_i = \mathcal{C}_0$,
i.e., $\{{\bf c}_j \ | \ j \in U\}=\mathcal{C}_0$.
Therefore,  $\phi({\bf x})=\mathcal{C}_0$.

\subsection{Codes with $t$-Uniqueness Descendant Code}

According to Theorem \ref{th-soft-TA},
if an $(n,M,q)$  code has $t$-uniqueness descendant code,
then it can be  used to identify all colluders by using the soft tracing algorithm.
Next, we will find out some codes with $t$-uniqueness descendant code.

\begin{lemma}
\label{lem-SSC-t-unique}
Any $\overline{t}$-SSC$(n,M,q)$ has $t$-uniqueness descendant code.
\end{lemma}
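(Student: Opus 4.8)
The plan is to argue by contradiction, exploiting the fact that the defining condition of an SSC, $\cap_{\mathcal{S}\in\mathcal{P}(\mathcal{C}_0)}\mathcal{S}=\mathcal{C}_0$, forces every codeword of $\mathcal{C}_0$ to lie inside \emph{every} subset of $\mathcal{C}$ that reproduces the descendant code ${\sf desc}(\mathcal{C}_0)$. I would first isolate a convenient ``largest'' such subset, and then show that a failure of $t$-uniqueness would let me strip one codeword off it without disturbing the descendant code, contradicting membership in the intersection.

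First I would fix an arbitrary $\mathcal{C}_0\subseteq\mathcal{C}$ with $1\le|\mathcal{C}_0|\le t$ and consider $\mathcal{T}:={\sf desc}(\mathcal{C}_0)\cap\mathcal{C}$. Since $\mathcal{C}_0\subseteq\mathcal{T}\subseteq{\sf desc}(\mathcal{C}_0)$, each coordinate set satisfies $\mathcal{C}_0(i)\subseteq\mathcal{T}(i)\subseteq\mathcal{C}_0(i)$, so $\mathcal{T}(i)=\mathcal{C}_0(i)$ for every $i$, whence ${\sf desc}(\mathcal{T})={\sf desc}(\mathcal{C}_0)$ and in particular $\mathcal{T}\in\mathcal{P}(\mathcal{C}_0)$. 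This is the bookkeeping that makes the later deletion argument clean.

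Next, suppose for contradiction that $\mathcal{C}$ does not have $t$-uniqueness descendant code, witnessed by this $\mathcal{C}_0$. Negating the definition, for every ${\bf c}\in\mathcal{C}_0$ and every coordinate $i$ there exists ${\bf c}'\in\mathcal{T}\setminus\{{\bf c}\}$ with ${\bf c}'(i)={\bf c}(i)$. Choosing any single ${\bf c}\in\mathcal{C}_0$ and putting $\mathcal{S}:=\mathcal{T}\setminus\{{\bf c}\}$, the negated hypothesis says precisely that removing ${\bf c}$ deletes no symbol from any coordinate, i.e.\ $\mathcal{S}(i)=\mathcal{T}(i)=\mathcal{C}_0(i)$ for all $i$; therefore ${\sf desc}(\mathcal{S})={\sf desc}(\mathcal{C}_0)$ and $\mathcal{S}\in\mathcal{P}(\mathcal{C}_0)$. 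But the SSC condition gives ${\bf c}\in\mathcal{C}_0=\cap_{\mathcal{S}'\in\mathcal{P}(\mathcal{C}_0)}\mathcal{S}'$, so ${\bf c}$ must lie in every member of $\mathcal{P}(\mathcal{C}_0)$, contradicting ${\bf c}\notin\mathcal{S}$. Hence the required unique codeword and coordinate exist, and $\mathcal{C}$ has $t$-uniqueness descendant code.

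The step needing the most care is verifying that deleting ${\bf c}$ from $\mathcal{T}$ really leaves all coordinate sets unchanged: this is exactly where the full strength of the negated uniqueness hypothesis (a shared value at \emph{every} coordinate, not just some) is consumed, and it is what certifies that $\mathcal{S}$ is a legitimate element of $\mathcal{P}(\mathcal{C}_0)$. Everything else reduces to the identities $\mathcal{T}(i)=\mathcal{C}_0(i)$ and ${\sf desc}(\mathcal{T})={\sf desc}(\mathcal{C}_0)$ established at the outset.
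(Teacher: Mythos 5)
Your proof is correct and follows essentially the same route as the paper's: negate the uniqueness property and exhibit a member of $\mathcal{P}(\mathcal{C}_0)$ that omits some ${\bf c}\in\mathcal{C}_0$, contradicting $\cap_{\mathcal{S}\in\mathcal{P}(\mathcal{C}_0)}\mathcal{S}=\mathcal{C}_0$. The only (immaterial) difference is the choice of witness set: you delete ${\bf c}$ from the maximal set $\bigl({\sf desc}(\mathcal{C}_0)\cap\mathcal{C}\bigr)$, whereas the paper uses $(\mathcal{C}_0\setminus\{{\bf c}\})\cup\bigl(\cup_{i=1}^{n}{\bf c}^{(i)}\bigr)$; both clearly have the same descendant code as $\mathcal{C}_0$.
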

\begin{IEEEproof}
Suppose that $\mathcal{C}$ is a $\overline{t}$-SSC$(n,M,q)$,
and $\mathcal{C}_0\subseteq \mathcal{C}$ with $1 \leq |\mathcal{C}_0|\le t$.
We will show that there exist a codeword ${\bf c} \in \mathcal{C}_0$ and $i \in \{1,2,\ldots,n\}$
such that ${\bf c}(i) \neq {\bf c}'(i)$
for any ${\bf c}' \in ({\sf desc}(\mathcal{C}_0) \cap \mathcal{C}) \setminus \{{\bf c}\}$.
Assume not. Then for any ${\bf c} \in \mathcal{C}_0$ and any $i \in \{1,2,\ldots,n\}$,
there exists ${\bf c}^{(i)} \in ({\sf desc}(\mathcal{C}_0) \cap \mathcal{C}) \setminus \{{\bf c}\}$,
such that ${\bf c}^{(i)}(i)={\bf c}(i)$.
Let $\mathcal{C}_1 = (\mathcal{C}_0\setminus \{{\bf c}\}) \cup (\cup_{i=1}^{n}{\bf c}^{(i)})$.
Then ${\sf desc}(\mathcal{C}_1)={\sf desc}(\mathcal{C}_0)$,
which implies that $\mathcal{C}_1 \in \mathcal{P}(\mathcal{C}_0) = \{ \mathcal{S} \subseteq \mathcal{C} \ | \ {\sf desc}(\mathcal{S}) = {\sf desc}(\mathcal{C}_0) \}$.
Obviously, ${\bf c} \notin \mathcal{C}_1$,
which implies  $\cap_{\mathcal{S} \in \mathcal{P}(\mathcal{C}_0)}\mathcal{S} \neq \mathcal{C}_0$,
a contradiction.
So code $\mathcal{C}$ has $t$-uniqueness descendant code.
\end{IEEEproof}

According to Table \ref{tab-rela-known-codes}, any $t$-FPC$(n,M,q)$ is a $\overline{t}$-SSC$(n,M,q)$.
Thus the following statement always holds.

\begin{corollary}
\label{cor-FPC-t-unique}
Any $t$-FPC$(n,M,q)$ has $t$-uniqueness descendant code.
\end{corollary}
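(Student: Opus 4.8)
The plan is to read off this statement as an immediate corollary of material already in hand. By the implication $t\text{-FPC}(n,M,q)\Rightarrow\overline{t}\text{-SSC}(n,M,q)$ recorded in Table \ref{tab-rela-known-codes} (and restated in the sentence preceding the corollary), every $t$-FPC is in particular a $\overline{t}$-SSC; Lemma \ref{lem-SSC-t-unique} then says that every $\overline{t}$-SSC$(n,M,q)$ has $t$-uniqueness descendant code, and chaining the two yields the claim in one line. So the shortest route is simply to cite these two facts in sequence.

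Should a self-contained argument be preferred, I would instead mirror the contradiction proof of Lemma \ref{lem-SSC-t-unique}, but exploit the stronger frameproof hypothesis to simplify it. Fix any $\mathcal{C}_0\subseteq\mathcal{C}$ with $1\le|\mathcal{C}_0|\le t$. Since $\mathcal{C}$ is a $t$-FPC we have ${\sf desc}(\mathcal{C}_0)\cap\mathcal{C}=\mathcal{C}_0$, so the comparison set $({\sf desc}(\mathcal{C}_0)\cap\mathcal{C})\setminus\{{\bf c}\}$ in the definition of $t$-uniqueness descendant code collapses to $\mathcal{C}_0\setminus\{{\bf c}\}$. Hence it suffices to exhibit a codeword ${\bf c}\in\mathcal{C}_0$ and a coordinate $i$ at which ${\bf c}$ is the only codeword of $\mathcal{C}_0$ bearing the symbol ${\bf c}(i)$. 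Assume no such pair exists; then for every ${\bf c}\in\mathcal{C}_0$ and every $i$ there is some ${\bf c}^{(i)}\in\mathcal{C}_0\setminus\{{\bf c}\}$ with ${\bf c}^{(i)}(i)={\bf c}(i)$. Fixing one ${\bf c}$ and putting $\mathcal{C}_1=\{{\bf c}^{(i)}\mid i\in\{1,2,\ldots,n\}\}\subseteq\mathcal{C}_0\setminus\{{\bf c}\}$, we get ${\bf c}(i)\in\mathcal{C}_1(i)$ for all $i$, so ${\bf c}\in{\sf desc}(\mathcal{C}_1)$; combined with ${\bf c}\in\mathcal{C}$ this gives ${\bf c}\in{\sf desc}(\mathcal{C}_1)\cap\mathcal{C}$ while ${\bf c}\notin\mathcal{C}_1$, contradicting the frameproof property applied to $\mathcal{C}_1$.

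I do not expect a genuine obstacle here, since the frameproof identity does all the work; the only points needing care are bookkeeping ones. First, the frameproof property must actually apply to the auxiliary set $\mathcal{C}_1$: this is fine because $\mathcal{C}_1\subseteq\mathcal{C}_0\setminus\{{\bf c}\}$ forces $1\le|\mathcal{C}_1|\le|\mathcal{C}_0|-1\le t-1\le t$, so ${\sf desc}(\mathcal{C}_1)\cap\mathcal{C}=\mathcal{C}_1$ is available and the membership ${\bf c}\in({\sf desc}(\mathcal{C}_1)\cap\mathcal{C})\setminus\mathcal{C}_1$ is the sought contradiction. Second, the degenerate case $|\mathcal{C}_0|=1$ should be noted separately: there the comparison set $\mathcal{C}_0\setminus\{{\bf c}\}$ is empty, so the uniqueness condition holds vacuously at every coordinate (equivalently, the ``suppose not'' assumption cannot even furnish a codeword ${\bf c}^{(i)}$). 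With these two remarks dispatched, the required pair $({\bf c},i)$ exists for every admissible $\mathcal{C}_0$, so $\mathcal{C}$ has $t$-uniqueness descendant code.
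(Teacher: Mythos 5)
Your first paragraph is exactly the paper's own proof: the paper derives this corollary in one line from the implication $t$-FPC$(n,M,q)\Rightarrow\overline{t}$-SSC$(n,M,q)$ in Table \ref{tab-rela-known-codes} together with Lemma \ref{lem-SSC-t-unique}, so on that route you and the paper coincide. Your alternative self-contained argument is also correct and is genuinely different in that it bypasses the SSC machinery entirely: since the frameproof identity ${\sf desc}(\mathcal{C}_0)\cap\mathcal{C}=\mathcal{C}_0$ collapses the comparison set to $\mathcal{C}_0\setminus\{{\bf c}\}$, the contradiction only needs the single auxiliary set $\mathcal{C}_1=\{{\bf c}^{(i)}\mid i\in\{1,\ldots,n\}\}$ and one application of the frameproof property to it, whereas the SSC proof must argue about the whole family $\mathcal{P}(\mathcal{C}_0)$. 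What the direct route buys is independence from Lemma \ref{lem-SSC-t-unique} and from the FPC-to-SSC implication; what the paper's route buys is brevity and reuse of a lemma it needs anyway. Your two bookkeeping remarks (that $1\le|\mathcal{C}_1|\le t$ so the frameproof property applies, and that the case $|\mathcal{C}_0|=1$ is vacuous) are exactly the right points to check, and both are handled correctly.
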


Furthermore, we find that the following code also has $t$-uniqueness descendant code.

\begin{definition}\label{defSMIPPC}
An $(n,M,q)$ code $\mathcal{C}$ is
a \textit{strongly $t$-identifiable parent property code for multimedia fingerprinting},
or $t$-SMIPPC$(n,M,q)$,
if for any subcode $\mathcal{C}_0\subseteq \mathcal{C}$ with $1 \leq |\mathcal{C}_0|\le t$,
we have
$\cap_{\mathcal{S}\in \mathcal{P}(\mathcal{C}_0)}\mathcal{S}\ne \emptyset$,
where
$\mathcal{P}(\mathcal{C}_0)=\{\mathcal{S}\subseteq \mathcal{C} \ | \  {\sf desc}(\mathcal{S})={\sf desc}(\mathcal{C}_0)\}$.
\end{definition}

\begin{lemma}
\label{lem-SMIPPC-unique}
Any $t$-SMIPPC$(n,M,q)$ has $t$-uniqueness descendant code.
\end{lemma}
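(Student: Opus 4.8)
The plan is to mirror the contradiction argument used for Lemma \ref{lem-SSC-t-unique}, but to exploit the weaker hypothesis with care: whereas the SSC property forces \emph{every} member of $\mathcal{C}_0$ to lie in $\cap_{\mathcal{S}\in\mathcal{P}(\mathcal{C}_0)}\mathcal{S}$, Definition \ref{defSMIPPC} only guarantees that this intersection is nonempty. So the single codeword that will witness the $t$-uniqueness property cannot be taken arbitrarily; it must be selected from that intersection.

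First I would fix an arbitrary $\mathcal{C}_0\subseteq\mathcal{C}$ with $1\le|\mathcal{C}_0|\le t$ and invoke the SMIPPC property to pick a codeword ${\bf c}\in\cap_{\mathcal{S}\in\mathcal{P}(\mathcal{C}_0)}\mathcal{S}$. Since $\mathcal{C}_0\in\mathcal{P}(\mathcal{C}_0)$ trivially, this ${\bf c}$ lies in $\mathcal{C}_0$, so it is a legitimate candidate. I then claim that this particular ${\bf c}$ satisfies the uniqueness requirement, namely that some coordinate $i$ isolates ${\bf c}$ within ${\sf desc}(\mathcal{C}_0)\cap\mathcal{C}$.

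To prove the claim I would argue by contradiction. Assuming failure, for every coordinate $i\in\{1,\ldots,n\}$ there is a codeword ${\bf c}^{(i)}\in({\sf desc}(\mathcal{C}_0)\cap\mathcal{C})\setminus\{{\bf c}\}$ with ${\bf c}^{(i)}(i)={\bf c}(i)$. Setting $\mathcal{C}_1=(\mathcal{C}_0\setminus\{{\bf c}\})\cup\{{\bf c}^{(i)}\mid i=1,\ldots,n\}$, I would verify coordinatewise that $\mathcal{C}_1(i)=\mathcal{C}_0(i)$ for every $i$: the inclusion $\mathcal{C}_1(i)\subseteq\mathcal{C}_0(i)$ holds because every member of $\mathcal{C}_1$ lies in ${\sf desc}(\mathcal{C}_0)$, while the reverse inclusion holds because a symbol of $\mathcal{C}_0(i)$ contributed by some ${\bf d}\neq{\bf c}$ survives in $\mathcal{C}_0\setminus\{{\bf c}\}$, and the symbol ${\bf c}(i)$ is reproduced by ${\bf c}^{(i)}$. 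Hence ${\sf desc}(\mathcal{C}_1)={\sf desc}(\mathcal{C}_0)$, so $\mathcal{C}_1\in\mathcal{P}(\mathcal{C}_0)$, yet ${\bf c}\notin\mathcal{C}_1$ by construction. This forces ${\bf c}\notin\cap_{\mathcal{S}\in\mathcal{P}(\mathcal{C}_0)}\mathcal{S}$, contradicting the choice of ${\bf c}$.

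The computation itself is routine—the coordinatewise equality $\mathcal{C}_1(i)=\mathcal{C}_0(i)$ is the same bookkeeping as in the SSC proof—so I do not expect a genuine technical obstacle. The one subtle point I would be careful to flag is the logical ordering: the SMIPPC hypothesis must be used to \emph{select} the witnessing codeword ${\bf c}$ \emph{before} the contradiction machinery is set up, precisely because, unlike in the SSC case, we may no longer assume that an arbitrary element of $\mathcal{C}_0$ lies in the intersection.
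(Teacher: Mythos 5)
Your proof is correct and rests on the same core construction as the paper's: building $\mathcal{C}_1=(\mathcal{C}_0\setminus\{{\bf c}\})\cup\{{\bf c}^{(i)} \mid i=1,\ldots,n\}$ with ${\sf desc}(\mathcal{C}_1)={\sf desc}(\mathcal{C}_0)$ so as to expel ${\bf c}$ from $\cap_{\mathcal{S}\in\mathcal{P}(\mathcal{C}_0)}\mathcal{S}$. The only difference is organizational — the paper negates the conclusion for every ${\bf c}_j\in\mathcal{C}_0$ and deduces the intersection is empty, whereas you run the construction once for a codeword preselected from the nonempty intersection — and both routes reach the same contradiction.
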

\begin{IEEEproof}
Suppose that $\mathcal{C}$ is a $t$-SMIPPC$(n,M,q)$,
and $\mathcal{C}_0\subseteq \mathcal{C}$ with $1 \leq |\mathcal{C}_0|\le t$.
We will show that there exist a codeword ${\bf c} \in \mathcal{C}_0$ and $i \in \{1,2,\ldots,n\}$
such that ${\bf c}(i) \neq {\bf c}'(i)$
for any ${\bf c}' \in ({\sf desc}(\mathcal{C}_0) \cap \mathcal{C}) \setminus \{{\bf c}\}$.
Assume not. Then for any ${\bf c} \in \mathcal{C}_0$ and any $i \in \{1,2,\ldots,n\}$,
there exists ${\bf c}^{(i)} \in ({\sf desc}(\mathcal{C}_0) \cap \mathcal{C}) \setminus \{{\bf c}\}$,
such that ${\bf c}^{(i)}(i)={\bf c}(i)$.
Without loss of generality,
we may suppose that $\mathcal{C}_0=\{{\bf c}_1, {\bf c}_2, \ldots, {\bf c}_{t_0}\}$,
where $1 \leq t_0 \leq t$.
Similar to the proof of Lemma \ref{lem-SSC-t-unique},
for any $j \in \{1,2,\ldots, t_0\}$,
there exists $\mathcal{C}_j \in \mathcal{P}(\mathcal{C}_0)
= \{ \mathcal{S} \subseteq \mathcal{C} \ | \ {\sf desc}(\mathcal{S}) = {\sf desc}(\mathcal{C}_0) \}$
and ${\bf c}_j \notin \mathcal{C}_j$.
Then ${\bf c}_j \notin \cap_{\mathcal{S} \in \mathcal{P}(\mathcal{C}_0)}\mathcal{S}$.
Due to the arbitrariness of $j$,
we conclude that ${\bf c}\notin \cap_{\mathcal{S} \in \mathcal{P}(\mathcal{C}_0)}\mathcal{S}$
for any ${\bf c} \in \mathcal{C}_0$.
On the other hand, it is obvious that $\mathcal{C}_0 \in \mathcal{P}(\mathcal{C}_0)$,
which implies that $\cap_{\mathcal{S} \in \mathcal{P}(\mathcal{C}_0)}\mathcal{S} \subseteq \mathcal{C}_0$.
Therefore, $\cap_{\mathcal{S} \in \mathcal{P}(\mathcal{C}_0)}\mathcal{S} = \emptyset$,
a contradiction to that $\mathcal{C}$ is a $t$-SMIPPC$(n,M,q)$.
So code $\mathcal{C}$ has $t$-uniqueness descendant code.
\end{IEEEproof}

\begin{remark}
According to Theorem \ref{th-soft-TA},
any $t$-SMIPPC$(n,M,2)$ can be  used to identify all colluders by using the soft tracing algorithm.
However, it is shown that at least one colluder can be identified with the tracing algorithm in \cite{JGC}.
This provides more evidence on the powerful function of the soft tracing algorithm resisting the averaging attacks.
\end{remark}

Now we can extend  Table \ref{tab-rela-known-codes} to
Table \ref{tab-rela-all-codes} by adding the concept of an SMIPPC,
where the relationship between an SSC and an SMIPPC can be found in Lemma \ref{lem-SSC-SMIPPC}.

\begin{lemma} \rm{(\cite{JGC})}
\label{lem-SSC-SMIPPC}
Any $\overline{t}$-SSC$(n,M,q)$ is a $t$-SMIPPC$(n,M,q)$.
\end{lemma}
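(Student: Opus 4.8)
The plan is to observe that the defining condition of a $t$-SMIPPC$(n,M,q)$ is simply a weakening of the defining condition of a $\overline{t}$-SSC$(n,M,q)$, so the implication follows directly from unwinding the two definitions over the same family $\mathcal{P}(\mathcal{C}_0)$. First I would fix an arbitrary subcode $\mathcal{C}_0 \subseteq \mathcal{C}$ with $1 \le |\mathcal{C}_0| \le t$ and note that both definitions are phrased in terms of the identical object $\mathcal{P}(\mathcal{C}_0)=\{\mathcal{S}\subseteq \mathcal{C} \ | \ {\sf desc}(\mathcal{S})={\sf desc}(\mathcal{C}_0)\}$. Since $\mathcal{C}$ is assumed to be a $\overline{t}$-SSC$(n,M,q)$, the SSC condition gives the exact identity $\cap_{\mathcal{S}\in \mathcal{P}(\mathcal{C}_0)}\mathcal{S}=\mathcal{C}_0$.

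Next I would invoke the hypothesis $|\mathcal{C}_0|\ge 1$, which forces $\mathcal{C}_0\ne\emptyset$. Combining this with the equality above yields $\cap_{\mathcal{S}\in \mathcal{P}(\mathcal{C}_0)}\mathcal{S}=\mathcal{C}_0\ne\emptyset$, which is precisely the condition required for $\mathcal{C}$ to be a $t$-SMIPPC$(n,M,q)$. Because $\mathcal{C}_0$ was an arbitrary admissible subcode, the conclusion holds for every $\mathcal{C}_0 \subseteq \mathcal{C}$ with $1 \le |\mathcal{C}_0| \le t$, and the proof is complete.

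There is no genuine obstacle here: the SMIPPC requirement $\cap_{\mathcal{S}\in \mathcal{P}(\mathcal{C}_0)}\mathcal{S}\ne\emptyset$ is a strict relaxation of the SSC requirement $\cap_{\mathcal{S}\in \mathcal{P}(\mathcal{C}_0)}\mathcal{S}=\mathcal{C}_0$, and the only ingredient beyond this relaxation is the non-emptiness of $\mathcal{C}_0$, guaranteed by $|\mathcal{C}_0|\ge 1$. The value of the lemma is therefore not in its difficulty but in its placement of SMIPPCs strictly below SSCs in the hierarchy of Table \ref{tab-rela-all-codes}; this weaker combinatorial structure is exactly what allows SMIPPCs to attain higher code rates while still retaining the traceability guaranteed by Theorem \ref{th-soft-TA}.
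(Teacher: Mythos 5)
Your proof is correct: the paper itself states this lemma without proof (citing \cite{JGC}), and your argument is the natural one — the SSC condition $\cap_{\mathcal{S}\in \mathcal{P}(\mathcal{C}_0)}\mathcal{S}=\mathcal{C}_0$ together with $|\mathcal{C}_0|\ge 1$ immediately yields the SMIPPC condition $\cap_{\mathcal{S}\in \mathcal{P}(\mathcal{C}_0)}\mathcal{S}\ne\emptyset$, since both definitions quantify over the same subcodes and use the identical family $\mathcal{P}(\mathcal{C}_0)$. Nothing is missing.
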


{\begin{table}[h]
\begin{center}
\caption{ Relationships among different types of fingerprinting codes}
\label{tab-rela-all-codes}
   \begin{tabular}{ccccccccc}
     &  & $t$-FPC$(n,M,q)$ &  $ \Longrightarrow $ & $\overline{t}$-SSC$(n,M,q)$ & $\Longrightarrow$ & $t$-SMIPPC$(n,M,q)$ \\
     & & $ \Downarrow $  &  & $ \Downarrow $ &  & $\Downarrow$ \begin{footnotesize} $q=2$ \end{footnotesize}\\
     & & SCLD$(n,M,q; L)$  & $ \Longrightarrow $ & $\overline{t}$-SC$(n,M,q)$ &
     $ \overset{q=2}{\Longrightarrow}  $ & AACC$(n,M,2)$
  \end{tabular}
\end{center}
\end{table}}

We also extend  Table \ref{tab-comp-konwn-codes} to
Table \ref{tab-comple-all-codes} by adding the concept of an SMIPPC.
{\begin{table*}[h]
\center
\caption{Comparison of computational complexities of tracing algorithms of binary fingerprinting codes}
\label{tab-comple-all-codes}
  \begin{tabular}{|c|c|c|c|c|c|c|c|c|}
\hline
 & FPC &   SSC  &   SC &  SCLD  & SMIPPC \\ \hline
\tabincell{c}{Complexity}
    &   $O(nM)$
    &   $O(nM)$
    &     $O(nM^t)$
    &   $O(\max\{nM,nL^t\})$
    &  $O(tnM)$  \\  \hline

Reference & \cite{Bla,CM} &  \cite{JCM}  &   \cite{CJM,CM}
&    \cite{GVM}& \tabincell{c}{Lemma \ref{lem-SMIPPC-unique},\\Theorem \ref{th-soft-TA}}\\ \hline
\end{tabular}
\end{table*}

In practice, the maximum number $t$ of colluders is very small comparing  with the size $M$ of the code.
Thus a $t$-SMIPPC$(n, M, 2)$ has the same traceability as a $t$-FPC$(n, M, 2)$
(or a $\overline{t}$-SSC$(n, M, 2)$), and better traceability than
those of a $\overline{t}$-SC$(n, M, 2)$ and a $\overline{t}$-SCLD$(n, M, 2; L)$ with $L > M^{\frac{1}{t}}$.

Finally,  we list the state-of-the-art code rates about FPCs, SSCs, SCs, SCLDs and SMIPPCs
in Table \ref{tab-comp-rates},
where $R_{\rm SMIPPC}(t,n) = \limsup_{q \rightarrow \infty}\frac{\log_q M_{\rm SMIPPC}(t, n, q)}{n}$, and
$M_{\rm SMIPPC}(t, n, q)$  denote the largest cardinality of a $q$-ary $t$-SMIPPC of length $n$.
According to this table, we know that a $t$-SMIPPC$(n,M,q)$ has the best code rate among these fingerprinting codes.}

{\begin{table*}[h]
\center
\caption{Comparison of code rates among different types of $q$-ary codes when $q\rightarrow \infty$}
\label{tab-comp-rates}
  \begin{tabular}{|c|c|c|c|c|c|c|c|}
\hline
 & $R_{\rm FPC}(t,n)$
 &  $R_{\rm SSC}(\overline{t},n)$, $R_{\rm SC}(\overline{t},n)$, $R_{\rm SCLD}(\overline{t},n;L)$
 &   $R_{\rm SMIPPC}(t,n)$ \\ \hline

\tabincell{c}{Code Rate}
    &   $=\frac{\lceil n/t \rceil}{n}$
    &  \tabincell{c}
    {$\leq
      \left\{\begin{array}{rl}
         \frac{\lceil 2n/3 \rceil}{n}, \  &   \mbox{if}  \ t=2,\\[2pt]
         \frac{\lceil n/(t-1) \rceil}{n},  &  \mbox{if}  \ t>2.\\[2pt]
             \end{array}
      \right.$}

    &  $\geq\frac{t}{2t-1}$   \\  \hline

Reference & \cite{Bla} & \cite{Bl2,GVM,JCM}  &  \cite{JGC} \\ \hline
\end{tabular}
\end{table*}

\section{Two-Stage Soft Tracing Algorithm}   %
\label{sec-two-STA}

As stated in \cite{For}, concatenation construction is a
powerful method to construct infinite families of codes with a required property by combining
a seed code with the property over a small alphabet, together with an appropriate code over
a large alphabet whose size is the size of the seed code.

Suppose that ${\mathcal B}=\{{\bf b}_1,\ldots, {\bf b}_M\}$ is an  $(n_1,M,q)$ code,
and ${\mathcal D}=\{{\bf d}_1, {\bf d}_2, \ldots, {\bf d}_q\}$ is an $(n_2,q,2)$ code.
Then we construct an $(n_1n_2, M, 2)$ code by concatenating ${\mathcal B}$ with ${\mathcal D}$ as follows.
Let $f: \{0,1,\ldots,q-1\} \rightarrow {\mathcal D}$ be a bijective mapping such that $f(k)={\bf d}_{k+1}$.
For any codeword ${\bf b} = ({\bf b}(1),{\bf b}(2),\ldots,{\bf b}(n_1))^{T} \in {\mathcal B}$,
we define $f({\bf b})=(f({\bf b}(1)),f({\bf b}(2)),\ldots, f({\bf b}(n_1)))^{T} $.
Obviously, $f({\bf b})$ is a binary vector of length $n_1n_2$.
We define a new $(n_1n_2,M,2)$ code
\begin{equation}
\label{equ-concat-code}
 \mathcal{C}=\{f({\bf b}_1), f({\bf b}_2), \ldots, f({\bf b}_M)\},
\end{equation}
denoted by $\mathcal{C} = \mathcal{B}\circ \mathcal{D}$.

\begin{example} Let
\begin{eqnarray*}
\mathcal{B}&=& \left(
  \begin{array}{cccccccc}
  {\bf b}_1 &{\bf b}_2 & {\bf b}_3 & {\bf b}_4 & {\bf b}_5 &{\bf b}_6 \\ \hline
     0 & 0 & 1 & 1 & 2 & 2 \\
    0 & 1 & 1 & 2 & 2 & 0\\
  \end{array}
\right), \ \ \ \ \ \ \
{\mathcal D}=
\left(
  \begin{array}{ccc}
  {\bf d}_1 &{\bf d}_2 & {\bf d}_3 \\ \hline
    0 & 1 & 0 \\
    0 & 0 & 1 \\
  \end{array}
\right).
\end{eqnarray*}

Concatenate ${\mathcal B}$ with ${\mathcal D}$, we obtain

\begin{eqnarray*}
\mathcal{C} = \mathcal{B}\circ \mathcal{D}=
\left(
  \begin{array}{cccccc}
  {\bf c}_1 &{\bf c}_2 & {\bf c}_3 & {\bf c}_4 & {\bf c}_5 &{\bf c}_6 \\ \hline
    0 & 0 & 1 & 1 & 0 & 0 \\
    0 & 0 & 0 & 0 & 1 & 1\\
    0 & 1 & 1 & 0 & 0 & 0\\
    0 & 0 & 0 & 1 & 1 & 0\\
  \end{array}.
\right)
\end{eqnarray*}
\end{example}

Clearly, each codeword of the code $\mathcal{D}$ may be used several times
to obtain the concatenated code $\mathcal{C}$.
Hence, we have to consider multi-set which is also useful to prove Theorem \ref{th-two-sta-STA}.
In order to distinguish simple set and multi-set, we use a square bracket to denote a multi-set.
For example, a  multi-set $\mathcal{D}_0=\{{\bf d}_1,{\bf d}_1,{\bf d}_2,{\bf d}_2,{\bf d}_2,{\bf d}_3\}$
will be written as $[\mathcal{D}_0] = [ 2\times {\bf d}_1, 3\times {\bf d}_2, 1\times {\bf d}_3 ]$.
For the multi-set  $[\mathcal{D}_0] = [r_{1}\times {\bf d}_1, r_{2}\times {\bf d}_2, \ldots, r_{s}\times {\bf d}_s]$,
the \textit{size} of $[\mathcal{D}_0]$ is  denoted by $|[\mathcal{D}_0]| = \sum_{j=1}^{s}r_j$.
Furthermore, if each element of $[\mathcal{D}_0]$ is contained in  $\mathcal{D}$,
we can use the notation $[\mathcal{D}_0] \subseteq \mathcal{D}$.
Similar to the case of simple set,
the generated code ${\bf x}$ of $[\mathcal{D}_0]$ is
\begin{equation}\label{eq-multi-generated-word}
{\bf x} = {\sf AT}([\mathcal{D}_0])= \frac{1}{r_1 + r_2 + \ldots + r_s}\sum\limits_{j=1}^{s}r_j{\bf d}_{j}.
\end{equation}
For convenience,
let ${\bf x}$ be of the form in Formula \eqref{equa-gene-word}.

Now, we will introduce a two-stage soft tracing algorithm (Algorithm \ref{alg-two-stage-STA})
for concatenated codes.
Here, we only illustrate the ideas of our algorithms,
and we will establish the conditions for the algorithms' validity later.

\begin{itemize}
\item[1)] Algorithm \ref{alg-multi-soft-TA}:
Suppose that ${\mathcal D}$ is an $(n,M,2)$ code,
$[\mathcal{D}_0] = [r_{1}\times {\bf d}_1, r_{2}\times {\bf d}_2, \ldots, r_{s}\times {\bf d}_s]
\subseteq \mathcal D$, and ${\bf x} = {\sf AT}([\mathcal{D}_0])$
being of the form in Formula (\ref{equa-gene-word}).
When the inputs are ${\bf x}$ and the size of $[\mathcal{D}_0]$,
we expect the algorithm to output the index set of the multi-set $[\mathcal{D}_0]$.
During the first iteration,
we know that the output of Algorithm \ref{alg-desc}
is ${\sf desc}(\{{\bf d}_1, {\bf d}_2, \ldots, {\bf d}_s\})$,
and the output of Algorithm \ref{alg-q-find-inter}
is an index set of a subset $[\mathcal{D}_1]$ of $[\mathcal{D}_0]$.
For the next iterations, we can regard $\mathcal{D}_1$ as a multi-set $[\mathcal{D}_1]$.
In the second iteration, the generated word is updated, i.e.,
${\bf x}$ is in fact the generated words of $[\mathcal{D}_0] \setminus [\mathcal{D}_1]$.
Similarly, Algorithm \ref{alg-q-find-inter} outputs an index set of  a subset
$[\mathcal{D}_2] \subseteq [\mathcal{D}_0] \setminus [\mathcal{D}_1]$.
Repeat this process.
In the last iteration, Algorithm \ref{alg-q-find-inter} outputs an index of a subset
$[\mathcal{D}_s] \subseteq [\mathcal{D}_0] \setminus \cup_{i=1}^{s-1}[\mathcal{D}_i]$.
Now we expect that $\mathcal[{D}_0] = \cup_{i=1}^{s}[\mathcal{D}_i]$.
In a word, we expect $[{\bf d}_j \ | \ j\in[U]]$ is equal to $[\mathcal{D}_0]$,
where $[U]$ is the output of the algorithm.

\item[2)] Algorithm \ref{alg-two-stage-STA}:
Suppose $\mathcal{C} = \mathcal{B}\circ \mathcal{D}$
is an $(n_1n_2, M, 2)$  code  in Formula \eqref{equ-concat-code}.
Let $\mathcal{C}_0\subseteq \mathcal{C}$
and ${\bf x}={\sf AT}(\mathcal{C}_0)$ being of the form in Formula (\ref{equa-gene-word}).
According to the construction in Formula \eqref{equ-concat-code},
we know that there exists a subset $\mathcal{B}_0 \subseteq \mathcal{B}$
with $|\mathcal{B}_0|=|\mathcal{C}_0|$,
such that $\mathcal{C}_0 = \mathcal{B}_0\circ \mathcal{D}$.
When the inputs are ${\bf x}$ and $n_1$,
we expect the algorithm to output the index set of $\mathcal{C}_0$.
We first determine the exact number of colluders,
i.e., we expect $t_0 =$ lcm$(t_1, t_2, \ldots, t_{n})$ is equal to the size of $\mathcal{C}_0$.
Next, the algorithm will enter the while loop.
For convenience, let $[\mathcal{D}_0^{(i)}]=[f({\bf b}(i)) \ | \ {\bf b} \in \mathcal{B}_0]$
for any  $i \in \{1,2,\ldots,n_1\}$.
During the first iteration of the while loop,
after the for loop,
we expect to  obtain $\mathcal{B}_0(i)$ for any $i \in \{1,2,\ldots,n_1\}$,
i.e., we expect that $\mathcal{B}_0(i)$ is equal to ${\bf R}(i)$,
where ${\bf R}(i)$ is the set in Line $10$ of the algorithm.
Hence  ${\sf desc(\mathcal{B}_0}) ={\bf R}$.
With the input ${\sf desc(\mathcal{B}_0}) ={\bf R}$,
Algorithm \ref{alg-q-find-inter} outputs an index set of a subset $\mathcal{B}_1$ of $\mathcal{B}_0$.
According to the construction in Formula \eqref{equ-concat-code},
it is obvious that such an index set is also an index of the subset $\mathcal{C}_1$ of $\mathcal{C}_0$,
where $\mathcal{C}_1= \mathcal{B}_1 \circ \mathcal{D}$.
In the second iteration of the while loop, the generated word is updated, i.e.,
${\bf x}$ is in fact the generated words of $\mathcal{C}_0 \setminus \mathcal{C}_1$.
Similarly, Algorithm \ref{alg-q-find-inter} outputs an index set of a subset
$\mathcal{C}_2 \subseteq \mathcal{C}_0 \setminus \mathcal{C}_1$.
Repeat this process.
In the last iteration, Algorithm \ref{alg-q-find-inter} outputs an index set of a subset
$\mathcal{C}_s \subseteq \mathcal{C}_0 \setminus \cup_{i=1}^{s-1}\mathcal{C}_i$.
Now we expect that $\mathcal{C}_0 = \cup_{i=1}^{s}\mathcal{C}_i$.
In a word, we expect $\{{\bf c}_j \ | \ j \in U\}$ is equal to $\mathcal{C}_0$,
where $U$ is the output of the algorithm.
\end{itemize}

\begin{algorithm}[H]
\caption{ {\tt Multi-set Soft Tracing Algorithm}}
\label{alg-multi-soft-TA}
\SetKwData{Left}{left}\SetKwData{This}{this}\SetKwData{Up}{up}
\SetKwFunction{Union}{Union}\SetKwFunction{FindCompress}{FindCompress}
\SetKwInOut{Input}{input}\SetKwInOut{Output}{output}

\Input {${\bf x} = ({\bf x}(1), {\bf x}(2), \ldots, {\bf x}(n))$, $|[{\mathcal D}_0]|$}

Denote $t_0 = |[{\mathcal D}_0]|$\;

$[U]=\emptyset$;

$Flag=True$;

\While {$Flag$ and $|[U]| < t_0$}
    {   ${\bf x}=\frac{t_0{\bf x}-\sum_{j\in [U]}{\bf c}_{j}}{t_0-|[U]|}$\;

        Execute Algorithm \ref{alg-desc}
        with the input ${\bf x}$,
        and the output is ${\bf R}$\;

        Execute Algorithm \ref{alg-q-find-inter} with the input ${\bf R}$, and the output is $U'$\;

        \If {$U'=\emptyset$}
            {   $Flag = False$ \;
            }

        \Else
            {$[U]=[U]\cup [U']$\;}

         }

 \Output {the index set $[U]$}

\end{algorithm}
\vskip 1cm

\begin{algorithm}[H]
\caption{ {\tt Two-Stage Soft Tracing Algorithm}}
\label{alg-two-stage-STA}
\SetKwData{Left}{left}\SetKwData{This}{this}\SetKwData{Up}{up}
\SetKwFunction{Union}{Union}\SetKwFunction{FindCompress}{FindCompress}
\SetKwInOut{Input}{input}\SetKwInOut{Output}{output}

\Input {${\bf x}= ({\bf x}(1), {\bf x}(2), \ldots, {\bf x}(n)), n_1$}

Denote $t_0 =$ lcm$(t_1, t_2, \ldots, t_{n})$, $n_2 = \frac{n}{n_1}$\;

$Flag=True$\;

$[U]=\emptyset$\;

\While{Flag and $U< t_0$}
    {   ${\bf x}=\frac{t_0{\bf x}-\sum_{j\in U}{\bf c}_{j}}{t_0-|U|}$\;

        \For { $i=1$  {\bf to }  $n_1$ }
            {
                ${\bf x}^{(i)} = ({\bf x}((i-1)n_2+1), {\bf x}((i-1)n_2+2), \ldots, {\bf x}(in_2))$\;
                Execute Algorithm \ref{alg-multi-soft-TA} with the input $({\bf x}^{(i)}, t_0)$,
                and the output is $[U']$\;
                \If {$|[U']| = t_0$}
                    {   ${\bf R}(i)=\{j-1 \ | \ j\in [U']\}$\;}

                \Else
                    {$Flag=False$\;}
            }

            \If{$Flag=True$}
                {   Execute Algorithm \ref{alg-q-find-inter} with the input
                    ${\bf R} = {\bf R}(1) \times {\bf R}(2) \times \ldots \times {\bf R}(n_1)$,
                    and the output is $U''$\;

                    \If {$U''=\emptyset$}
                    {   $Flag = False$ \;}

                    \Else
                        { $U=U\cup U''$}
                }
    }

\If {$|U| \neq t_0$}
    {
       \Output {       This code does not satisfy the conditions of the algorithm.  }
    }
\Else
    { \Output {the index set $U$}}

  \end{algorithm}
\vskip 1cm

We now characterize the codes that satisfy the algorithm's validity criteria.

\begin{theorem}
\label{th-two-sta-STA}
Suppose that  $\mathcal{C} = \mathcal{B}\circ \mathcal{D}$ is of the form in Formula \eqref{equ-concat-code},
and both of the  codes ${\mathcal B}$ and ${\mathcal D}$ have $t$-uniqueness descendant codes.
Under the assumption that the number of colluders in the averaging attack is at most $t$,
the code $\mathcal{C}$ can be applied to identify all colluders by using the two-stage soft tracing algorithm (Algorithm \ref{alg-two-stage-STA}),
and the computational complexity is $O(t^2n_1n_2q + \min\{t,q\}tn_1M)$.
\end{theorem}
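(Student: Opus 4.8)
The plan is to mirror the proof of Theorem~\ref{th-soft-TA}, but with an extra inner decoding stage that converts the binary generated word back into the $q$-ary descendant code of the outer code. First I would fix the structure. By the concatenation in Formula~\eqref{equ-concat-code}, the colluder set has the form $\mathcal{C}_0 = \mathcal{B}_0 \circ \mathcal{D}$ for some $\mathcal{B}_0 \subseteq \mathcal{B}$ with $|\mathcal{B}_0| = |\mathcal{C}_0| = t_0 \le t$. The length-$n_1 n_2$ word ${\bf x} = {\sf AT}(\mathcal{C}_0)$ splits into $n_1$ consecutive blocks of length $n_2$, and the $i$-th block ${\bf x}^{(i)}$ is exactly the generated word ${\sf AT}([\mathcal{D}_0^{(i)}])$ of the size-$t_0$ multiset $[\mathcal{D}_0^{(i)}] = [\,f({\bf b}(i)) \mid {\bf b} \in \mathcal{B}_0\,] \subseteq \mathcal{D}$, where distinct outer codewords sharing the same $i$-th symbol collapse to equal $\mathcal{D}$-entries and thereby create multiplicities.

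The first technical ingredient is a multiset analogue of Theorem~\ref{th-soft-TA} for Algorithm~\ref{alg-multi-soft-TA}: if $\mathcal{D}$ has $t$-uniqueness descendant code and the current number of remaining colluders is supplied as input, then the tracer recovers the whole multiset $[\mathcal{D}_0^{(i)}]$, in particular its support $\mathcal{B}_0(i)$. I would prove this by the same peeling argument as in Theorem~\ref{th-soft-TA}, using Proposition~\ref{propo-desc} to obtain ${\sf desc}$ of the current support and Lemma~\ref{lem-find-inter} to extract a nonempty uniquely-identifiable subset at each step; the only new point is the bookkeeping for multiplicities, namely that an element of multiplicity $\mu$ stays in the support (and is therefore re-isolated) until all $\mu$ copies have been subtracted one unit at a time, so the loop terminates exactly when the accumulated multiplicity reaches the supplied size. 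Feeding each block through this tracer yields ${\bf R}(i) = \mathcal{B}_0(i)$, hence ${\bf R} = {\sf desc}(\mathcal{B}_0)$; then Lemma~\ref{lem-find-inter} applied to the outer code $\mathcal{B}$ (which also has $t$-uniqueness) returns a nonempty $\mathcal{B}_1 \subseteq \mathcal{B}_0$, i.e. $\mathcal{C}_1 = \mathcal{B}_1 \circ \mathcal{D} \subseteq \mathcal{C}_0$. Updating ${\bf x}$ to ${\sf AT}(\mathcal{C}_0 \setminus \mathcal{C}_1)$ and repeating, the outer while loop of Algorithm~\ref{alg-two-stage-STA} peels off $\mathcal{C}_0$ in finitely many rounds, exactly as in the single-stage proof.

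The step I expect to be the main obstacle is justifying the colluder-count line $t_0 = \mathrm{lcm}(t_1,\dots,t_n)$, which replaces the ``$\max$'' used in Algorithm~\ref{alg-soft-TA} and whose correctness underlies the whole scheme. Here I would first use the $t$-uniqueness of $\mathcal{B}$ to locate an outer coordinate $i_0$ and a codeword ${\bf b} \in \mathcal{B}_0$ whose symbol $k_0 = {\bf b}(i_0)$ is unique within ${\sf desc}(\mathcal{B}_0)\cap\mathcal{B}$; this forces $f(k_0)$ to occur with multiplicity exactly $1$ in $[\mathcal{D}_0^{(i_0)}]$. Since every denominator divides $t_0$, it suffices to show that the $\mathrm{lcm}$ of the block-$i_0$ denominators equals $t_0$, equivalently that $\gcd(t_0, \{N_m\}) = 1$, where $N_m$ is the numerator of ${\bf x}^{(i_0)}(m)$. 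Suppose a prime $p$ divided $t_0$ and every $N_m$. Applying $\mathcal{D}$'s $t$-uniqueness to the current support produces an isolated codeword whose unique coordinate forces its multiplicity to be divisible by $p$; that codeword cannot be $f(k_0)$, whose multiplicity is $1$, so I remove all its copies and descend, preserving both $p \mid N_m$ for all $m$ and $p \mid (\text{remaining size})$. Iterating strips away every support element except $f(k_0)$, leaving a multiset of size $1$ that must be $\equiv 0 \pmod p$, a contradiction. Hence $\mathrm{lcm}(t_1,\dots,t_n)=t_0=|\mathcal{C}_0|$, so the number passed to the inner tracer is correct.

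Finally, the complexity is a routine count. The outer loop runs at most $t$ times; within each round, the $n_1$ block calls to Algorithm~\ref{alg-multi-soft-TA} each cost $O(t\,n_2 q)$ (its own loop runs $\le t$ times, each invoking Algorithm~\ref{alg-q-find-inter} on the binary code $\mathcal{D}$ of size $q$ at cost $O(n_2 q)$ by Lemma~\ref{lem-find-inter}), and the single outer call to Algorithm~\ref{alg-q-find-inter} on $\mathcal{B}$ costs $O(\min\{t,q\}\,n_1 M)$. Multiplying by the $t$ outer rounds gives $O(t^2 n_1 n_2 q + \min\{t,q\}\,t\,n_1 M)$, as claimed.
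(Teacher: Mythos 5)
Your proposal is correct and follows the paper's overall architecture: split ${\bf x}$ into $n_1$ blocks, recover each block multiset $[\mathcal{D}_0^{(i)}]$ with the multiset tracer (Algorithm~\ref{alg-multi-soft-TA}), assemble ${\sf desc}(\mathcal{B}_0)$, extract a nonempty subset of $\mathcal{B}_0$ via Algorithm~\ref{alg-q-find-inter}, peel, and your complexity count coincides with the paper's. Where you genuinely diverge is the step you yourself flag as the main obstacle, the colluder count $t_0={\sf lcm}(t_1,\ldots,t_n)$. The paper proves it in two pieces: Lemma~\ref{find-multi-SMIPPC-binary} triangularizes the support of a block multiset by repeated applications of $\mathcal{D}$'s $t$-uniqueness and shows by induction down the rows that $|[\mathcal{D}_0^{(i)}]|=b_iL_i$ with $b_i$ dividing \emph{every} multiplicity $r_{i,k}$; Proposition~\ref{propo-multi-No-coll} then notes that a global quotient $b\ge 2$ forces $b_i\ge b\ge 2$ in every block, so no symbol of $\mathcal{B}_0$ is unique in any outer coordinate, contradicting $\mathcal{B}$'s $t$-uniqueness. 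You invert the order: you invoke $\mathcal{B}$'s uniqueness \emph{first} to pin down one block $i_0$ containing a multiplicity-one element $f(k_0)$, and then run a prime-by-prime peeling inside that single block, showing that any prime $p$ dividing $t_0$ and all the counts must divide the multiplicity of each successively isolated support element (which therefore is never $f(k_0)$), until only $f(k_0)$ survives and $p\mid 1$. Both arguments rest on the same peeling-by-uniqueness mechanism, but yours is more localized and avoids establishing the stronger ``$b$ divides every multiplicity in every block'' statement; the one point to state carefully is that the numerators $N_m$ in your gcd condition must be the unreduced counts $t_0\,{\bf x}^{(i_0)}(m)$ rather than the reduced numerators $a_m$, since the equivalence you use is ${\sf lcm}_m\, t_m = t_0/\gcd\bigl(t_0, t_0{\bf x}^{(i_0)}(1),\ldots,t_0{\bf x}^{(i_0)}(n_2)\bigr)$.
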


Similar to the previous section, in order to prove the above theorem, we would like to determine the exact number of the colluders first.

\begin{lemma}
\label{find-multi-SMIPPC-binary}
Let $\mathcal{D}$ be a $(n, M, 2)$ code with $t$-uniqueness descendant code.
Suppose that $[\mathcal{D}_0] = [r_{1}\times {\bf d}_1, r_{2}\times {\bf d}_2, \ldots, r_{s}\times {\bf d}_s] \subseteq \mathcal{D}$ with $1 \leq |[\mathcal{D}_0]| \leq t$,
${\bf x}= {\sf AT}([\mathcal{D}_0])$ being of the form in Formula (\ref{equa-gene-word}),
and $L ={\sf lcm}(t_1, t_2, \ldots, t_n)$, i.e., the least common multiple.
Then there exists a positive integer $b$, such that $| [\mathcal{D}_0] | = bL$.
Furthermore, $b \mid r_i$ holds for any $i \in \{1,2,\ldots,s\}$.
\end{lemma}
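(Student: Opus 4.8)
The plan is to separate the two assertions and establish the divisibility $L \mid |[\mathcal{D}_0]|$ first, then bootstrap from it. Write $N = |[\mathcal{D}_0]|$ and, for each coordinate $i$, let $w_i = \sum_{j:\,{\bf d}_j(i)=1} r_j$ be the total multiplicity of codewords carrying a $1$ in position $i$. Since $\mathcal{D}$ is binary, Formula \eqref{eq-multi-generated-word} gives ${\bf x}(i) = w_i/N$. Comparing this with the reduced form ${\bf x}(i) = a_i/t_i$ (so $\gcd(a_i,t_i)=1$) yields $a_i N = t_i w_i$, whence $t_i \mid N$ for every $i$. As $L = {\sf lcm}(t_1,\ldots,t_n)$, this forces $L \mid N$, so $b := N/L$ is a positive integer; this is the first assertion.

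For the second assertion I would first record a uniform divisibility on the weights: from $w_i = {\bf x}(i)\,N = a_i b\,(L/t_i)$ together with $t_i \mid L$, each factor is an integer, so $b \mid w_i$ for every coordinate $i$ (the degenerate cases ${\bf x}(i)\in\{0,1\}$ being immediate). The idea is then to recover the individual multiplicities $r_j$ from the weights by peeling off the distinct codewords one at a time, using the $t$-uniqueness descendant code property of $\mathcal{D}$. Let $P=\{{\bf d}_1,\ldots,{\bf d}_s\}$ be the underlying set of distinct codewords, so $|P|=s\le N\le t$. For a subset $P'\subseteq P$ put $N(P')=\sum_{{\bf d}_j\in P'} r_j$ and $w_i(P')=\sum_{{\bf d}_j\in P',\,{\bf d}_j(i)=1} r_j$. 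I would run an induction maintaining the invariant that $b$ divides $N(P')$ and every $w_i(P')$, which holds initially for $P'=P$ by the previous paragraph.

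At each step, apply the $t$-uniqueness property to the current set $P'$ (legitimate since $|P'|\le t$) to obtain ${\bf d}_{j}\in P'$ and a coordinate $i$ with ${\bf d}_{j}(i)\neq {\bf d}_k(i)$ for all ${\bf d}_k\in P'\setminus\{{\bf d}_j\}$, because every such ${\bf d}_k$ lies in $({\sf desc}(P')\cap\mathcal{D})\setminus\{{\bf d}_j\}$. If ${\bf d}_j(i)=1$ then $w_i(P')=r_j$, and if ${\bf d}_j(i)=0$ then $w_i(P')=N(P')-r_j$; in either case the invariant gives $b\mid r_j$. Replacing $P'$ by $P'\setminus\{{\bf d}_j\}$ decreases $N(P')$ and each $w_i(P')$ by the $b$-divisible quantity $r_j$, so the invariant persists, and after $s$ steps we obtain $b\mid r_j$ for all $j$. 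The main obstacle is this peeling argument: I must argue in the descendant-code formulation of uniqueness rather than mere distinctness within $P'$, and track both the value-$1$ and value-$0$ cases so that the running totals $N(P')$ stay $b$-divisible and both cases deliver $b \mid r_j$; the first assertion is otherwise routine.
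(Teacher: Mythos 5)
Your proposal is correct and takes essentially the same route as the paper's proof: deduce $L \mid |[\mathcal{D}_0]|$ from the reduced fractions, observe that $b$ divides every column weight (the paper's identity $\sum_j r_j {\bf d}_j(i) = bb_i$), and then peel off the distinct codewords one at a time by applying the $t$-uniqueness property to nested subsets, with the identical two-case analysis (unique symbol $1$ gives $r_j = w_i(P')$, unique symbol $0$ gives $r_j = N(P') - w_i(P')$). The only difference is presentational: the paper relabels indices and tracks explicit correction sums such as $r_{s-1} = bL - \sum_{j=1}^{s-2} r_j - bb_{s-1} + \sum_{j=1}^{s-2} r_j {\bf d}_j(s-1)$, whereas you package the same bookkeeping as an invariant-maintaining induction on the shrinking set, which is arguably cleaner.
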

\begin{IEEEproof} Firstly, according to Formulas \eqref{equa-gene-word} and \eqref{eq-multi-generated-word},
$| [\mathcal{D}_0] |$ should be a multiple of $t_i$ for any  $i \in \{1,2,\ldots,n\}$.
Together with the fact that $L ={\sf lcm}(t_1, t_2, \ldots, t_n)$,
one can know  that there exists a positive integer $b$, such that $| [\mathcal{D}_0] | = bL$.
Hence
\begin{eqnarray}
\label{num-multi-set}
\sum_{j=1}^{s}r_j = | [\mathcal{D}_0] | = bL.
\end{eqnarray}

Next ${\bf x}$ can be written as
\begin{equation}\label{equa-gene-word-1}
{\bf x} = (\frac{a_1}{t_1}, \frac{a_2}{t_2}, \ldots, \frac{a_n}{t_n})=
(\frac{b_1}{L}, \frac{b_2}{L}, \ldots,  \frac{b_{n}}{L}).
\end{equation}
where $b_i = \frac{La_i}{t_i}$ for any $i \in \{1,2,\ldots,n\}$.
Furthermore,
${\bf x}$ can also be written as
\begin{equation}\label{equa-gene-word-1}
{\bf x} = (\frac{bb_1}{bL}, \frac{bb_2}{bL}, \ldots,  \frac{bb_{n}}{bL}).
\end{equation}
Thus one can know that
\begin{eqnarray}
\label{eq-conca-coord}
\sum_{j=1}^{s}r_j{\bf d}_j(i)=bb_i
\end{eqnarray}
holds for any  $i \in \{1,2,\ldots,n\}$ as the number of colluders $| [\mathcal{D}_0] |$ is $bL$.

Let $\mathcal{D}_1 = \{ {\bf d}_1, {\bf d}_2, \ldots, {\bf d}_s\}$.
Since $\mathcal{D}$ has $t$-uniqueness descendant code and $s\leq t$, there exist  ${\bf d} \in \mathcal{D}_1$
and $i \in \{1,2,\ldots,n\}$ such that
${\bf d}(i) \neq {\bf d}'(i)$ for any ${\bf d}' \in \mathcal{D}_1 \setminus \{{\bf d}\}$.
Without loss of generality, suppose that ${\bf d} = {\bf d}_1$ and $i = 1$,
i.e.,  ${\bf d}_1(1) \neq {\bf d}'(1)$ for any ${\bf d}' \in \mathcal{D}_1 \setminus \{{\bf d_1}\}$.
Similarly, we can suppose that ${\bf d}_i(i) \neq {\bf d}'(i)$
for any ${\bf d}' \in \mathcal{D}_{i} \setminus \{{\bf d}_{i}\}$,
where $\mathcal{D}_{i} = \{ {\bf d}_{i}, {\bf d}_{i+1}, \ldots, {\bf d}_{s}\}$ and
$i \in \{2,3,\ldots,s-1\}$.
\begin{itemize}
  \item Consider the first row of $[\mathcal{D}_0]$.
  \begin{itemize}
  \item[1)] If ${\bf d}_1(1) = 1$,  then ${\bf d}_j(1) = 0$ for any $j \in \{2,3,\ldots,s\}$.
   According to Formula \eqref{eq-conca-coord},
  one can obtain that $r_1{\bf d}_1(1)=bb_1$, i.e., $r_1=bb_1$, which implies that $b \mid r_1$.
   \item[2)] If ${\bf d}_1(1) = 0$, then ${\bf d}_j(1) = 1$ for $j \in \{2,3,\ldots,s\}$.
   According to Formula \eqref{eq-conca-coord},
   one can obtain $\sum_{j=2}^{s}r_j{\bf d}_j(1)=bb_1$, i.e., $\sum_{j=2}^{s}r_j=bb_1$.
   On the other hand, by using the fact $\sum_{j=1}^{s}r_j = bL$ in Formula \eqref{num-multi-set},
   we have $r_1=bL-\sum_{j=2}^{s}r_j=bL-bb_1=b(L-b_1)$, which implies $b \mid r_1$.
  \end{itemize}
\item Consider the second row of $[\mathcal{D}_0]$.
  \begin{itemize}
  \item[1)] If ${\bf d}_2(2) = 1$,  then ${\bf d}_j(2) = 0$ for any $j \in \{3,4,\ldots,s\}$.
   According to Formula \eqref{eq-conca-coord},
  one can obtain that $r_1{\bf d}_1(2) + r_2{\bf d}_2(2)=bb_2$,
  i.e., $r_1{\bf d}_1(2) + r_2=bb_i$.
  Hence $r_2 =bb_i-r_1{\bf d}_1(2)$. Together with the fact $b \mid r_1$,
   one can obtain that $b \mid r_2$.
  \item[2)] If ${\bf d}_2(2) = 0$, then ${\bf d}_j(2) = 1$ for any $j \in \{3,4,\ldots,s\}$.
   According to Formula \eqref{eq-conca-coord},
   one can obtain $r_1{\bf d}_1(2)+\sum_{j=3}^{s}r_j{\bf d}_j(2)=bb_2$,
   i.e., $r_1{\bf d}_1(2)+\sum_{j=3}^{s}r_j=bb_2$.
   On the other hand, by using  the fact $\sum_{j=1}^{s}r_j = bL$ in Formula \eqref{num-multi-set},
   we have $r_2=bL-r_1-\sum_{j=3}^{s}r_j=bL-r_1-bb_2 + r_1{\bf d}_1(2)$,
   which implies $b \mid r_2$ as $b \mid r_1$.
  \end{itemize}
 $\vdots$
 \item Consider the $(s-1)$th row of $[\mathcal{D}_0]$.
 \begin{itemize}
  \item[1)] If ${\bf d}_{s-1}(s-1) = 1$, then ${\bf d}_s(s-1) = 0$.
   According to Formula \eqref{eq-conca-coord},
  one can obtain that $\sum_{j=1}^{s-1}r_j{\bf d}_j(s-1)=bb_{s-1}$,
  i.e., $\sum_{j=1}^{s-2}r_j{\bf d}_j(s-1) + r_{s-1}=bb_{s-1}$.
  Hence $r_{s-1} =bb_{s-1}-\sum_{j=1}^{s-2}r_j{\bf d}_j(s-1)$.
  Together with the facts $b \mid r_j$ for any $j \in \{1,2,\ldots,s-2\}$,
   one can obtain that $b \mid r_{s-1}$.
  \item[2)] If ${\bf d}_{s-1}(s-1) = 0$, then ${\bf d}_s(s-1) = 1$.
  According to Formula \eqref{eq-conca-coord},
   one can obtain $\sum_{j=1}^{s-2}r_j{\bf d}_j(s-1) + r_s{\bf d}_s(s-1)=bb_{s-1}$,
   i.e., $\sum_{j=1}^{s-2}r_j{\bf d}_j(s-1) + r_s=bb_{s-1}$.
   On the other hand, by using  the fact $\sum_{j=1}^{s}r_j = bL$ in Formula \eqref{num-multi-set},
   we have $r_{s-1} = bL - \sum_{j=1}^{s-2}r_j - r_s
   = bL - \sum_{j=1}^{s-2}r_j - bb_{s-1} + \sum_{j=1}^{s-2}r_j{\bf d}_j(s-1)$.
   Together with the facts $b \mid r_j$ for any  $j \in \{1,2,\ldots,s-2\}$,
   one can obtain that $b \mid r_{s-1}$.
  \end{itemize}
\item Consider the $s$th row of $[\mathcal{D}_0]$.
According to the fact $\sum_{j=1}^{s}r_j = bL$ in Formula \eqref{num-multi-set},
   we have $r_s=bL-\sum_{j=1}^{s-1}r_j$.
   Together with the facts $b \mid r_j$ for any  $j \in \{1,2,\ldots,s-1\}$,
   one can obtain that $b \mid r_s$.

\end{itemize}
According to the above discussions,  the conclusion is true.
\end{IEEEproof}

Now, we can determine the exact number of the colluders.

\begin{proposition}
\label{propo-multi-No-coll}
Let  $\mathcal{C} = \mathcal{B}\circ \mathcal{D}$ be of the form in Formula \eqref{equ-concat-code},
and both of the  codes ${\mathcal B}$ and ${\mathcal D}$ have $t$-uniqueness descendant codes.
Suppose that  $\mathcal{C}_0 \subseteq \mathcal{C}$ with $1 \leq | \mathcal{C}_0 | \leq t$, and
${\bf x} = {\sf AT}(\mathcal{C}_0)$ being of the form in Formula (\ref{equa-gene-word}) with $n = n_1n_2$,
Then $|\mathcal{C}_0| =L$, where $L = {\sf lcm}(t_1, t_2, \ldots, t_{n_1n_2})$.
\end{proposition}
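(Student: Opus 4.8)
The plan is to prove the two divisibilities $L \mid |\mathcal{C}_0|$ and $|\mathcal{C}_0| \mid L$ separately, whence $|\mathcal{C}_0| = L$. Write $t_0 = |\mathcal{C}_0|$. The first divisibility is immediate: since ${\bf x} = {\sf AT}(\mathcal{C}_0)$ and $|\mathcal{C}_0| = t_0$, Formula \eqref{equa-gene-word} together with \eqref{eq-generated-word-coord} shows that each coordinate ${\bf x}(j) = a_j/t_j$ (in lowest terms) also equals $(\text{integer})/t_0$, so $t_j \mid t_0$ for every $j$, and hence $L = {\sf lcm}(t_1, \ldots, t_{n_1 n_2}) \mid t_0$. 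The real work is the reverse direction $t_0 \mid L$, where the concatenation structure and the $t$-uniqueness of \emph{both} $\mathcal{B}$ and $\mathcal{D}$ enter.

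First I would set up the block decomposition. By the construction in Formula \eqref{equ-concat-code}, there is a subcode $\mathcal{B}_0 \subseteq \mathcal{B}$ with $|\mathcal{B}_0| = t_0$ and $\mathcal{C}_0 = \mathcal{B}_0 \circ \mathcal{D}$. Partition the coordinates $\{1, \ldots, n_1 n_2\}$ into $n_1$ consecutive blocks of length $n_2$, and let ${\bf x}^{(i)}$ be the restriction of ${\bf x}$ to the $i$-th block. Then ${\bf x}^{(i)} = {\sf AT}([\mathcal{D}_0^{(i)}])$, where $[\mathcal{D}_0^{(i)}] = [f({\bf b}(i)) \mid {\bf b} \in \mathcal{B}_0] \subseteq \mathcal{D}$ is a multiset of size $t_0$, and the multiplicity in it of the codeword $f(k)$ equals the number of ${\bf b} \in \mathcal{B}_0$ with ${\bf b}(i) = k$. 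Let $L_i$ denote the ${\sf lcm}$ of the denominators of ${\bf x}^{(i)}$, so that $L = {\sf lcm}(L_1, \ldots, L_{n_1})$.

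Next, since $\mathcal{D}$ has $t$-uniqueness descendant code and $|[\mathcal{D}_0^{(i)}]| = t_0 \le t$, Lemma \ref{find-multi-SMIPPC-binary} applies to each block: there is a positive integer $b_i$ with $t_0 = b_i L_i$, and $b_i$ divides every multiplicity occurring in $[\mathcal{D}_0^{(i)}]$. To pin down a block where $b_i = 1$, I would invoke the $t$-uniqueness of $\mathcal{B}$: since $\mathcal{B}_0 \subseteq \mathcal{B}$ with $|\mathcal{B}_0| = t_0 \le t$, there exist ${\bf b}^{\ast} \in \mathcal{B}_0$ and an index $i^{\ast}$ with ${\bf b}^{\ast}(i^{\ast}) \neq {\bf b}'(i^{\ast})$ for all ${\bf b}' \in ({\sf desc}(\mathcal{B}_0) \cap \mathcal{B}) \setminus \{{\bf b}^{\ast}\}$, and in particular for all ${\bf b}' \in \mathcal{B}_0 \setminus \{{\bf b}^{\ast}\}$. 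Hence the symbol ${\bf b}^{\ast}(i^{\ast})$ is contributed by ${\bf b}^{\ast}$ alone, so the codeword $f({\bf b}^{\ast}(i^{\ast}))$ has multiplicity exactly $1$ in $[\mathcal{D}_0^{(i^{\ast})}]$. Since $b_{i^{\ast}}$ divides this multiplicity, $b_{i^{\ast}} = 1$, and therefore $t_0 = L_{i^{\ast}}$.

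Finally I would assemble the divisibilities. From $t_0 = L_{i^{\ast}}$ and $L_{i^{\ast}} \mid L$ we get $t_0 \mid L$; combined with $L \mid t_0$ from the first paragraph, this yields $t_0 = L$, i.e. $|\mathcal{C}_0| = L$. The main obstacle is the step forcing $b_{i^{\ast}} = 1$: one must carefully marry Lemma \ref{find-multi-SMIPPC-binary} (a statement about multisets over the inner binary code $\mathcal{D}$) with the $t$-uniqueness of the outer code $\mathcal{B}$, observing that a symbol of $\mathcal{B}_0$ which is unique in its column translates into a multiplicity-$1$ codeword in the corresponding block multiset. Verifying that ${\bf x}^{(i)}$ really is the generated word of $[\mathcal{D}_0^{(i)}]$ and that $L = {\sf lcm}(L_1, \ldots, L_{n_1})$ is routine bookkeeping, but getting the block indexing right is essential.
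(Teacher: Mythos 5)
Your proof is correct and follows essentially the same route as the paper's: the same block decomposition $\mathcal{C}_0=\mathcal{B}_0\circ\mathcal{D}$, the same appeal to Lemma \ref{find-multi-SMIPPC-binary} on each block multiset $[\mathcal{D}_0^{(i)}]$, and the same use of the $t$-uniqueness of $\mathcal{B}$ to control the multiplicities. The only difference is organizational: the paper argues by contradiction (if $|\mathcal{C}_0|=bL$ with $b\ge 2$ then every $b_i\ge 2$, so no symbol of $\mathcal{B}_0$ is unique in any column), whereas you argue directly by locating the block $i^{\ast}$ where uniqueness forces a multiplicity-$1$ entry and hence $b_{i^{\ast}}=1$.
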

\begin{IEEEproof}
Since ${\bf x} = {\sf AT}(\mathcal{C}_0)$ being of the form in Formula (\ref{equa-gene-word})
 and $L ={\sf lcm}(t_1, t_2, \ldots, t_{n_1n_2})$,
$|\mathcal{C}_0|$ should be a multiple of $L$.
Assume that $|\mathcal{C}_0| \neq L$.
Then there exists a positive integer $b$ with $b \geq 2$ such that $|\mathcal{C}_0| = bL$.

Since $\mathcal{C} = \mathcal{B}\circ \mathcal{D}$,
there exists a subset $\mathcal{B}_0 \subseteq \mathcal{B}$,
such that $\mathcal{C}_0 = \mathcal{B}_0 \circ \mathcal{D}$.
For convenience, suppose that
\begin{equation}\label{eq-outcode}
  [\mathcal{B}_0(i)] = [r_{i,0}\times 0, r_{i,1}\times 1, \ldots, r_{i,q-1}\times (q-1)],
\end{equation}
and
\begin{equation}\label{eq-innerword}
  [\mathcal{D}_0^{(i)}] = [f(j) \ | \ j \in [\mathcal{B}_0(i)]]
= [r_{i,0}\times {\bf d}_1, r_{i,1}\times {\bf d}_2, \ldots, r_{i,q-1}\times {\bf d}_q],
\end{equation}
where $r_{i,k}=0$ when $k \notin [\mathcal{B}_0(i)]$,
$i \in \{1,2,\ldots,n_1\}$ and $k \in \{0,1,\ldots,q-1\}$.

For any $i \in \{1,2,\ldots,n_1\}$,
let $L_i = {\sf lcm}(t_{(i-1)n_2+1}, t_{(i-1)n_2+2},\ldots, t_{in_2})$.
Then $L_i \leq L$.
In addition, according to  Lemma \ref{find-multi-SMIPPC-binary},
 there exists a positive integer $b_i$, such that $| [\mathcal{D}_0^{(i)}] | = b_iL_i$.
Furthermore, $b_i \mid r_{i,k}$ holds for any $k \in \{0,1,\ldots,q-1\}$.
According to the construction $\mathcal{C}_0 = \mathcal{B}_0 \circ \mathcal{D}$,
we know that $|\mathcal{C}_0| = | [\mathcal{D}_0^{(i)}] |$, i.e., $bL=b_iL_i$.
Together with the above facts $L_i \leq L$ and $b \geq 2$,
we have $b_i \geq b \geq 2$.
Again, according to the above facts $b_i \mid r_{i,k}$ for any $k \in \{0,1,\ldots,q-1\}$,
we have that $r_{i,k} \neq 0$ implies $r_{i,k} \geq 2$ for any $k \in \{0,1,\ldots,q-1\}$.
Hence  there exists no codeword ${\bf b} \in \mathcal{B}_0$ and $i \in \{1,2,\ldots,n_1\}$
such that ${\bf b}(i) \neq {\bf b}'(i)$ for any ${\bf b}' \in \mathcal{B}_0 \setminus \{{\bf b}\}$
according to Formula \eqref{eq-outcode}.
This contradicts the hypothesis that the code ${\mathcal B}$ has $t$-uniqueness descendant code.
So $|\mathcal{C}_0| = L$.
\end{IEEEproof}

\begin{proposition}
\label{lem-find-multi-STA}
Let $\mathcal{D}$ be an $(n, M, 2)$ code with $t$-uniqueness descendant code.
Suppose that $[\mathcal{D}_0] \subseteq \mathcal{D}$
with $1 \leq |[\mathcal{D}_0]| \leq t$ and ${\bf x} = {\sf AT}([\mathcal{D}_0])$
being of the form in Formula (\ref{equa-gene-word}).
If the inputs are  ${\bf x}$ and $|[\mathcal{D}_0]|$,
then Algorithm \ref{alg-multi-soft-TA} outputs the index set of $[\mathcal{D}_0]$,
i.e., $[{\bf d}_j \ | \ j\in[U]]=[\mathcal{D}_0]$.\,
where $[U]$ is the output of Algorithm  \ref{alg-multi-soft-TA}.
In addition, the computational complexity is $O(tnM)$.
\end{proposition}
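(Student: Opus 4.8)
The plan is to run the same argument as in the proof of Theorem \ref{th-soft-TA}, but carried out entirely at the level of multi-sets. Write $[\mathcal{D}_0] = [r_1 \times {\bf d}_1, r_2 \times {\bf d}_2, \ldots, r_s \times {\bf d}_s]$ with the ${\bf d}_j$ pairwise distinct and each $r_j \geq 1$, and let $\mathcal{D}_1 = \{{\bf d}_1, {\bf d}_2, \ldots, {\bf d}_s\}$ be its support. Since $t_0 = |[\mathcal{D}_0]| = \sum_{j=1}^s r_j \geq s$ and $t_0 \leq t$, the support $\mathcal{D}_1$, and the support of every sub-multi-set arising later, has size at most $t$; this is exactly what is needed to apply Lemma \ref{lem-find-inter} to $\mathcal{D}$ at each stage.

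The core is a loop invariant: each time control reaches Line $6$, the working vector satisfies ${\bf x} = {\sf AT}([\mathcal{D}_0] \setminus [\mathcal{D}_U])$, where $[\mathcal{D}_U] = [{\bf d}_j \ | \ j \in [U]]$ and $[\mathcal{D}_U] \subseteq [\mathcal{D}_0]$ as multi-sets. The base case $[U] = \emptyset$ is immediate. For the step, I verify by the same computation as in Theorem \ref{th-soft-TA} that the update in Line $5$, using $t_0\,{\sf AT}([\mathcal{D}_0]) = \sum_{j=1}^s r_j {\bf d}_j$ from Formula \eqref{eq-multi-generated-word}, subtracts precisely one copy of each ${\bf d}_j$ with $j \in [U]$ from the numerator and reduces the denominator to $t_0 - |[U]| = |[\mathcal{D}_0] \setminus [\mathcal{D}_U]|$, so the quotient is exactly ${\sf AT}([\mathcal{D}_0] \setminus [\mathcal{D}_U])$.

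Granting the invariant, I feed ${\bf x}$ through the two subroutines in each pass. Because ${\sf desc}$ depends only on the set of symbols occurring in each coordinate and every surviving multiplicity is positive, the vector ${\bf x} = {\sf AT}([\mathcal{D}_0] \setminus [\mathcal{D}_U])$ has the same zero/one/intermediate pattern as the support of the remaining multi-set; hence Proposition \ref{propo-desc} gives that Algorithm \ref{alg-desc} returns ${\bf R} = {\sf desc}$ of that support. As this support has size at most $t$ and $\mathcal{D}$ has $t$-uniqueness descendant code, Lemma \ref{lem-find-inter} yields $U' \neq \emptyset$ with $\{{\bf d}_j \ | \ j \in U'\}$ contained in the support. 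Adjoining one copy of each such index to $[U]$ in Line $11$ removes exactly one present copy of each corresponding codeword, so $[\mathcal{D}_U] \subseteq [\mathcal{D}_0]$ is preserved and the invariant is restored for the next pass.

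Finally I close with monotonicity and counting. Each pass enlarges $|[U]|$ by $|U'| \geq 1$, and $U'$ cannot be empty while copies remain (the remaining support is then non-empty of size at most $t$), so $Flag$ stays $True$ and the loop halts precisely when $|[U]| = t_0$; at that moment $[\mathcal{D}_U] \subseteq [\mathcal{D}_0]$ with $|[\mathcal{D}_U]| = t_0 = |[\mathcal{D}_0]|$ forces $[\mathcal{D}_U] = [\mathcal{D}_0]$, i.e., $[{\bf d}_j \ | \ j \in [U]] = [\mathcal{D}_0]$. For the complexity, each pass costs $O(n)$ for Algorithm \ref{alg-desc} (Proposition \ref{propo-desc}) and $O(\min\{t,2\}nM) = O(nM)$ for Algorithm \ref{alg-q-find-inter} with $q = 2$ (Lemma \ref{lem-find-inter}); with at most $t_0 \leq t$ passes this totals $O(tnM)$. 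I expect the main obstacle to be the multi-set bookkeeping itself: confirming that the single-copy subtraction in Line $5$ faithfully models removal of one copy per identified codeword, that the support-level conclusions of Proposition \ref{propo-desc} and Lemma \ref{lem-find-inter} transfer to multi-sets, and that the iteration never stalls before $|[U]|$ reaches $t_0$.
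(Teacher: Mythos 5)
Your proof is correct and follows essentially the same route as the paper's own (much terser) argument: establish that at each pass the working vector is the generated word of the remaining multi-set, apply Proposition~\ref{propo-desc} and Lemma~\ref{lem-find-inter} to its support (which has size at most $t$), and iterate until $|[U]|=t_0$, with the identical $O(tnM)$ complexity count. Your version merely makes explicit the loop invariant and the multiplicity bookkeeping that the paper compresses into ``similar to the case of simple subcode.''
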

\begin{IEEEproof}
Firstly, consider the computational complexity.
According to Proposition \ref{propo-desc} and Lemma \ref{lem-find-inter},
the computational complexities of Algorithm \ref{alg-desc} and Algorithm \ref{alg-q-find-inter}
are $O(n)$ and $O(\min\{t,2\}nM)$, respectively.
In addition, since $1 \leq |[\mathcal{D}_0]| \leq t$ according to the hypothesis of the lemma,
we know that $t_0 \leq t$ where $t_0$ is the parameter in Line $1$ of Algorithm \ref{alg-multi-soft-TA}.
Thus the computational complexity is $O(tnM)$.

Similar to the case of simple subcode,
the set ${\bf R}$ in Line $6$ of Algorithm \ref{alg-multi-soft-TA}
is in fact ${\sf desc}([\mathcal{D}_0])$,
and $[{\bf d}_j \ | \ j\in U'] \subseteq [\mathcal{D}_0]$,
where $U'$ is the set in Line $7$ of Algorithm \ref{alg-multi-soft-TA}, i.e.,  the output of Algorithm \ref{alg-q-find-inter}.
So one can continue the loop until all the codewords in  $[\mathcal{D}_0]$ are obtained.
\end{IEEEproof}

\textit{Proof of Theorem \ref{th-two-sta-STA}: }
We need to show that for any $\mathcal{C}_0 \subseteq \mathcal{C}$ with $1 \leq |{\mathcal{C}_0}| =t_0 \leq t$
and ${\bf x} = {\sf AT}({\mathcal{C}_0})$ being of the form in Formula (\ref{equa-gene-word}),
the following two conditions are satisfied:
\begin{itemize}
  \item $\phi({\bf x})=\mathcal{C}_0$, where $\mathcal{\phi}$ is Algorithm \ref{alg-two-stage-STA}.
  \item The computational complexity of Algorithm \ref{alg-two-stage-STA}
  is $O(t^2n_1n_2q + \min\{t,q\}tn_1M)$.
\end{itemize}

Firstly, consider the computational complexity.
Since $\mathcal{D}$ is an $(n_2,q,2)$ code with $t$-uniqueness descendant code,
according to  Proposition \ref{lem-find-multi-STA},
the computational complexity of Algorithm \ref{alg-multi-soft-TA} is $O(tn_2q)$.
Hence the computational complexity  of Lines $6-15$
in Algorithm \ref{alg-two-stage-STA} is $O(tn_1n_2q)$.
Together with the fact that the computational complexity of Algorithm \ref{alg-q-find-inter} is
$O(\min\{t,q\}n_1M)$,
we have  the computational complexity of Algorithm \ref{alg-two-stage-STA}
is $O(t^2n_1n_2q + \min\{t,q\}tn_1M)$.

Secondly, we will show that $\phi({\bf x})=\mathcal{C}_0$.
Since $\mathcal{C}_0 \subseteq \mathcal{C}$,
there exists a subset $\mathcal{B}_0 \subseteq \mathcal{B}$,
such that $\mathcal{C}_0=\mathcal{B}_0\circ \mathcal{D}$.
Since both of the  codes ${\mathcal B}$ and ${\mathcal D}$ have $t$-uniqueness descendant codes,
we know that $t_0 = |{\mathcal{C}_0}|$ according to Proposition \ref{propo-multi-No-coll},
where $t_0$ is the number in Line $1$ of Algorithm \ref{alg-two-stage-STA}.
The algorithm below will perform iterations of the while loop.

In the first iteration,
since ${\bf x} = {\sf AT}(\mathcal{C}_0)$,
we know that ${\bf x}^{(i)}$ in Line $7$ of Algorithm \ref{alg-two-stage-STA}
is the generated word of the multiset $[{f({\bf b}_j(i))} \ | \ {\bf b}_j\in \mathcal{B}_0]$
for any $i \in \{1,2,\ldots,n\}$.
In addition, since $|\mathcal{C}_0|=t_0$, we have $|\mathcal{B}_0|=t_0$.
Thus $|[{f({\bf b}_j(i))} \ | \ {\bf b}_j\in \mathcal{B}_0]|=t_0$.
According to Proposition \ref{lem-find-multi-STA},
we know that Algorithm \ref{alg-multi-soft-TA} outputs $[{f({\bf b}_j(i))} \ | \ {\bf b}_j\in \mathcal{B}_0]$,
i.e., $[{\bf d}_j \ | \ j\in[U']] = [{f({\bf b}_j(i))} \ | \ {\bf b}_j\in \mathcal{B}_0]$,
where $[U']$ is the set in Line $8$ of Algorithm \ref{alg-two-stage-STA}.
So ${\bf R}(i)$ in Line $10$ of Algorithm \ref{alg-two-stage-STA} is in fact $\mathcal{B}_0(i)$.
Then ${\bf R}$ in Line $17$ of Algorithm \ref{alg-two-stage-STA} is ${\sf desc}(\mathcal{B}_0)$.
According to Lemma \ref{lem-find-inter},
Algorithm \ref{alg-q-find-inter} outputs an index set of a non-empty subset of $\mathcal{B}_0$
as $\mathcal{B}$ has $t$-uniqueness descendant code,
i.e.,  $U'' \neq \emptyset$ and $\{{\bf b}_j \ | \ j \in U''\} \subseteq \mathcal{B}_0$,
where $U''$ is the set in Line $17$ of Algorithm \ref{alg-two-stage-STA}.
According to the relation $\mathcal{C}_0=\mathcal{B}_0\circ \mathcal{D}$,
one can immediately derive that
\begin{eqnarray*}
U'' \neq \emptyset \  \mbox {and} \ \mathcal{C}_1 = \{{\bf c}_j \ | \ j \in U''\} \subseteq \mathcal{C}_0.
\end{eqnarray*}
In summary, during the first iteration,
one can obtain an index set of a non-empty subset $\mathcal{C}_1$ of $\mathcal{C}_0$
by using the generated word ${\bf x}={\sf AT}(\mathcal{C}_0)$.
\begin{itemize}
  \item If $|U|=t_0$, we can know that $|\mathcal{C}_1|=t_0=|\mathcal{C}_0|$.
Thus $\mathcal{C}_1=\mathcal{C}_0$ as $\mathcal{C}_1\subseteq \mathcal{C}_0$.
That is $\{{\bf c}_j \ | \ j \in U\}=\mathcal{C}_0$.
Therefore,   $\phi({\bf x})=\mathcal{C}_0$.
\item If $|U|< t_0$, then the algorithm proceeds to the second iteration.
\end{itemize}

In the second iteration, one can directly check that
the updated generated word ${\bf x}$ in the left-hand side of the equation  in Line $5$
is the generated word of $\mathcal{C}_0 \setminus \mathcal{C}_1$, i.e.,
${\bf x} = {\sf AT}(\mathcal{C}_0 \setminus \mathcal{C}_1)$.
Similar to the first iteration, we have
\begin{eqnarray*}
U'' \neq \emptyset \ \mbox{and} \ \mathcal{C}_2=\{{\bf c}_j \ | \ j\in U''\} \subseteq
\mathcal{C}_0 \setminus \mathcal{C}_1,
\end{eqnarray*}
where $U''$ is the set in Line 17 of Algorithm \ref{alg-two-stage-STA}.
\begin{itemize}
  \item If $|U|=t_0$, we can know that $|\mathcal{C}_2| +|\mathcal{C}_1| = t_0 = |\mathcal{C}_0|$,
  i.e., $|\mathcal{C}_2|=|\mathcal{C}_0|-|\mathcal{C}_1|$.
Thus $\mathcal{C}_2=\mathcal{C}_0 \setminus \mathcal{C}_1$
as $\mathcal{C}_2\subseteq\mathcal{C}_0  \setminus \mathcal{C}_1$.
Thus $\mathcal{C}_1 \cup \mathcal{C}_2 = \mathcal{C}_0$,
i.e., $\{{\bf c}_j \ | \ j \in U\}=\mathcal{C}_0$.
Therefore,  $\phi({\bf x})=\mathcal{C}_0$.
\item If $|U|< t_0$, then the algorithm proceeds to the next iteration.
\end{itemize}

Repeat this process.

In the last iteration,  we can obtain that
the updated generated word ${\bf x}$ in the left-hand side of the equation  in Line $5$
is the generated word of $\mathcal{C}_0 \setminus \cup_{i=1}^{s-1} \mathcal{C}_i$, i.e.,
 ${\bf x} = {\sf AT}(\mathcal{C}_0 \setminus \cup_{i=1}^{s-1} \mathcal{C}_i)$.
Thus
\begin{eqnarray*}
U'' \neq \emptyset \ \mbox{and} \ \mathcal{C}_s=\{{\bf c}_j \ | \ j\in U''\} \subseteq
 \mathcal{C}_0 \setminus \cup_{i=1}^{s-1}\mathcal{C}_i,
\end{eqnarray*}
where $U''$ is the set in Line 17 of Algorithm \ref{alg-two-stage-STA}.
Since this is the last iteration, $|U|=t_0$ must hold.
Then $|\mathcal{C}_s| +|\cup_{i=1}^{s-1}\mathcal{C}_i| = t_0 = |\mathcal{C}_0|$,
  i.e., $|\mathcal{C}_s|=|\mathcal{C}_0|-|\cup_{i=1}^{s-1}\mathcal{C}_i|$.
Thus $\mathcal{C}_s=\mathcal{C}_0 \setminus \cup_{i=1}^{s-1}\mathcal{C}_i$
as $\mathcal{C}_2\subseteq\mathcal{C}_0  \setminus \cup_{i=1}^{s-1}\mathcal{C}_i$.
Thus $\cup_{i=1}^{s}\mathcal{C}_i = \mathcal{C}_0$,
i.e., $\{{\bf c}_j \ | \ j \in U\}=\mathcal{C}_0$.
Therefore,  $\phi({\bf x})=\mathcal{C}_0$.
This completes the proof.

\begin{remark} According to Theorem \ref{th-two-sta-STA}, we need two codes with $t$-uniqueness descendant code.
That is both $\mathcal{B}$ and $\mathcal{D}$ have $t$-uniqueness descendant codes.
Recall Lemma \ref{lem-SSC-t-unique}, Corollary \ref{cor-FPC-t-unique} and Lemma \ref{lem-SMIPPC-unique},
each of $t$-FPCs, $\overline{t}$-SSCs and $t$-SMIPPCs has this property.
Hence one can select one or two types of such codes to
form concatenated codes $\mathcal{C}$.
\end{remark}

According to Definition \ref{def_AACC},
any concatenated  code $\mathcal{C}$ in Theorem \ref{th-two-sta-STA}
with Algorithm \ref{alg-two-stage-STA} is a $t$-AACC.
So we can again extend  Table \ref{tab-rela-all-codes} to Table \ref{tab-rela-all-codes-concate} by adding the concatenated  code $\mathcal{C}$.

{\begin{table}[h]
\begin{center}
\caption{ Relationships among different types of AACCs}
\label{tab-rela-all-codes-concate}
   \begin{tabular}{ccccccccc}
     &  & $t$-FPC$(n,M,q)$ &  $ \Longrightarrow $ & $\overline{t}$-SSC$(n,M,q)$ & $\Longrightarrow$ & $t$-SMIPPC$(n,M,q)$ \\
     & & $ \Downarrow $  &  & $ \Downarrow $ &  & $\Downarrow$ \begin{footnotesize} $q=2$ \end{footnotesize}\\
     & & SCLD$(n,M,q; L)$  & $ \Longrightarrow $ & $\overline{t}$-SC$(n,M,q)$ &
     $ \overset{q=2}{\Longrightarrow}  $ & AACC$(n,M,2)$ \\
     &&&&&& $\Uparrow$\\
    &&&&&& Concatenated Code $\mathcal{C}$ in Theorem \ref{th-two-sta-STA}.
  \end{tabular}
\end{center}
\end{table}}

\begin{remark}
Up to now, we have showed  $t$-AACCs include $t$-FPC$(n,M,2)$,
$\overline{t}$-SSC$(n,M,2)$, $\overline{t}$-SCLD$(n,M,2)$, $\overline{t}$-SC$(n,M,2)$,
$t$-SMIPPC$(n,M,2)$ and the concatenated code $\mathcal{C}$ in Theorem \ref{th-two-sta-STA} as special cases.
The authors believe such known codes capture only a fraction of $t$-AACCs.
So it is interesting to discover new $t$-AACCs.
\end{remark}

We also extend  Table \ref{tab-comple-all-codes} to
Table \ref{tab-comple-AACCs} by adding  the concatenated code $\mathcal{C}$ in Theorem \ref{th-two-sta-STA}.
{\begin{table*}[h]
\center
\caption{Comparison of computational complexities of tracing algorithms of different types of AACCs}
\label{tab-comple-AACCs}
  \begin{tabular}{|c|c|c|c|c|c|c|c|c|}
\hline
 & FPC &   SSC  &   SC &  SCLD  & SMIPPC & Concatenated Code \\ \hline
\tabincell{c}{Complexity}
    &   $O(nM)$
    &   $O(nM)$
    &     $O(nM^t)$
    &   $O(\max\{nM,nL^t\})$
    &  $O(tnM)$
    &  $O(t^2nq + \min\{t,q\}tn_1M)$ \\  \hline

Reference & \cite{Bla,CM} &  \cite{JCM}  &   \cite{CJM,CM}
&    \cite{GVM}& \tabincell{c}{ Theorem \ref{th-soft-TA}, \\ Lemma \ref{lem-SMIPPC-unique} }&
\tabincell{c}{Theorem \ref{th-two-sta-STA},  Lemma \ref{lem-SSC-t-unique},
\\ Corollary \ref{cor-FPC-t-unique},  Lemma \ref{lem-SMIPPC-unique}}\\ \hline
\end{tabular}
\end{table*}

\begin{remark}
\label{rema-complex}
According to Table \ref{tab-comple-AACCs}, the following statements are always hold.
\begin{itemize}
  \item[1)] FPC, SSC and SCLD: When $t^2q <M$ and $\min\{t,q\}t < n_2$,
  noting that $n=n_1n_2$ in the concatenated code,
  we can derive that the concatenated code
  has better traceability than a $t$-FPC$(n, M, 2)$ (or a $\overline{t}$-SSC$(n, M, 2)$,
  or a $\overline{t}$-SCLD$(n, M, 2; L)$).
\item[2)] SC: It is obvious that the concatenated code  has better traceability than a $\overline{t}$-SC$(n, M, 2)$.
\item[3)] SMIPPC: When $tq <M$ and $\min\{t,q\} < n_2$,
the concatenated code  has better traceability than a $t$-SMIPPC$(n, M, 2)$.
\end{itemize}
\end{remark}

\section{Conclusion}\label{sec-conclu} %

In this paper, we proposed a research framework for multimedia fingerprinting codes that
designing algorithms first and then identifying codes compatible with these algorithms.
Specifically, we introduced the soft tracing algorithm and the two-stage soft tracing algorithm,
and showed that binary SMIPPCs and their concatenated codes satisfy the conditions of the above two algorithms, respectively.
Both theoretical and numerical comparisons shows that SMIPPCs achieve higher code rates.
It  would be  interesting to find out more promising algorithms and their corresponding codes.
}
\vskip 1cm


\begin{thebibliography}{1}


\bibitem{Alon2004}
N. Alon and U. Stav, ``New bounds on parent-identifying codes: The case of multiple parents," \emph{Combin., Probab. Comput.,} vol. 13, no. 6, pp. 795--807, 2004.





\bibitem{BCEKZ}
A. Barg, G. Cohen, S. Encheva, G. Kabatiansky, and G. Z\'{e}mor,
``A hypergraph approach to the identifying parent property: The case of multiple parents,"
{\it SIAM J. Discr. Math.},
vol. 14, no. 3, pp. 423-431, 2001.





\bibitem{BT}
M. Bazrafshan and T. van Trung,
``On optimal bounds for separating hash families,"
presented at the Germany Africa Workshop on Inform. Commun. Tech., Essen, Germany, 2008.





\bibitem{Bla}
S. R. Blackburn,
``Frameproof codes,"
{\it SIAM J. Discrete Math.},
vol. 16, no. 3, pp. 499-510, 2003.


\bibitem{Bl2}
S. R. Blackburn,
``Probabilistic existence results for separable codes,"
{\it IEEE Trans. Inform. Theory},
vol. 61, no. 11, pp. 5822--5827, 2015.










\bibitem{BS}
D. Boneh and J. Shaw,
``Collusion-secure fingerprinting for digital data,"
{\it IEEE Trans. Inform. Theory},
vol. 44, no. 5, pp. 1897-1905, 1998.

\bibitem{CW}
B. Chen and G. W. Wornell,
``Quantization index modulation: A class of provable good methods for digital watermarking and information embedding,"
{\it IEEE Trans. Inform. Theory},
vol. 47, no. 4, pp. 1423-1443, 2001.

\bibitem{Cheng}
M. Cheng,
Anti-Collusion Codes and Tracing Algorithms for Multimedia Fingerprinting,
Ph.D. dissertation, University of Ttukuba, 2012.

%
%


\bibitem{CFJLM}
M. Cheng, H. L. Fu, J. Jiang, Y. H. Lo,  and Y. Miao,
``Codes with the identifiable parent property for multimedia fingerprinting,"
{\it Des.  Codes  Cryptogr.},
vol. 83, no. 1, pp. 71-82, 2017.



\bibitem{CFJLM1}
M. Cheng, H. L. Fu, J. Jiang, Y. H. Lo,  and Y. Miao,
``New bounds on $\overline{2}$-separable codes of length 2,"
{\it Des.  Codes  Cryptogr.},
vol. 74, no. 1, pp. 31-40, 2015.




\bibitem{CJM}
M. Cheng, L. Ji, and Y. Miao,
``Separable codes,"
{\it IEEE Trans. Inform. Theory},
vol. 58, no. 3, pp. 1791-1803, 2012.

\bibitem{CM}
M. Cheng and Y. Miao,
``On anti-collusion codes and detection algorithms for multimedia fingerprinting,"
{\it IEEE Trans. Inform. Theory},
vol. 57, no. 7, pp. 4843-4851, 2011.

\bibitem{CKLS}
 I. J. Cox, J. Kilian, F. T. Leighton, and T. G. Shamoon,
 ``Secure spread spectrum watermarking for multimedia,"
 {\it IEEE Trans. Image Process.},
 vol. 6, no. 12, pp. 1673-1687, 1997.


\bibitem{DSSSU}
J. Dittmann, P. Schmitt, E. Saar, J. Schwenk, and J. Ueberberg,
``Combining digital watermarks and collusion secure fingerprints for digital images,"
 {\it SPIE J. Electron. Imag.},
vol. 9, no. 4, pp. 456-467, 2000.


\bibitem{EKK}
F. Ergun, J. Kilian, and R. Kumar,
``A note on the limits of collusion-resistant watermarks,"
in {\it Proc. Eurocrypt},
1999, pp. 140-149.

\bibitem{EFKL}
 E. Egorova, M. Fernandez, G. Kabatiansky, and M. H. Lee,
 ``Signature codes for A-channel and collusion-secure multimedia fingerprinting codes,"
 in {\it Proc. 2016 IEEE Int. Symp.  Inform. Theory},
2016, pp. 3043-3047.


\bibitem{For}
G. D. Forney,
Concatenated Codes, Cambridge, MA: MIT Press, 1966.


\bibitem{FGC}
 T. Furon, A. Guyader, and F. C\'{e}rou,
 ``Decoding fingerprinting using the markov chain monte carlo method,"
 in {\it IEEE Workshop on inform. Foren. Sec.},
 2012.



\bibitem{GG}
F. Gao and G. Ge,
``New bounds on separable codes for multimedia fingerprinting,"
{\it IEEE Trans. Inform. Theory},
vol. 60, no. 9, pp. 5257-5262, 2014.


\bibitem{GVM}
Y. Gu, I. Vorobyev, and Y. Miao
``Secure codes with list decoding,"
{\it IEEE Trans. Inform. Theory},
vol. 70, no. 4, pp. 2430-2442, 2024.

\bibitem{GST}
C. Guo, D. R. Stinson, and T. van Trung,
``On tight bounds for binary frameproof codes,"
{\it Des.  Codes  Cryptogr.},
vol. 77, pp. 301-319, 2015.




\bibitem{HLLT}
H. D. L. Hollmann, J. H. van Lint, J. P. Linnartz, and L. M. G. M. Tolhuizen,
``On codes with the identifiable parent property,"
{\it J. Combin. Theory Ser. A},
vol. 82, no. 1, pp. 121-133, 1998.


\bibitem{JCM}
J. Jiang, M. Cheng, and Y. Miao,
``Strongly separable codes,"
{\it Des.  Codes  Cryptogr.},
vol. 79, no. 2, pp. 303-318, 2016.


\bibitem{JGC}
J. Jiang, Y. Gu, and M. Cheng,
``Multimedia IPP codes with efficient tracing,"
{\it Des.  Codes  Cryptogr.},
vol. 88, no. 5, pp. 851-866, 2020.

\bibitem{JPWCH}
J. Jiang, F. Pei, C. Wen, M. Cheng, and H. D. L. Hollmann,
``Constructions of  $t$-strongly multimedia IPP codes with length $t+1$,"
{\it Designs, Codes and Cryptography},
vol. 92, no.10, pp. 2949-2970, 2024.







\bibitem{KLMSTZ}
J. Kilian, T. Leighton, L. Matheson, T. Shamoon, R. Tarjan, and F. Zane,
``Resistance of digital watermarks to collusive attacks,"
Dept. Comput. Sci., Princeton Univ., Princeton, NJ, Tech. Rep. TR-585-98, 1998.


\bibitem{LWLPF}
Q. Li, X. Wang, Y. Li, Y. Pan, and P. Fan,
``Construction of anti-collusion codes based on cover-free families,"
in {\it Proc. 6th Int. Conf. Inform. Tech.: New Generations},
Las Vegas, NV, 2009, pp. 362-365.

\bibitem{LT}
Z. Li and W. Trappe,
``Collusion-resistant fingerprints from WBE sequence sets,"
in {\it Proc. IEEE Int. Conf. Commun.},
Seoul, Korea, 2005, vol. 2, pp. 1336-1340.



\bibitem{LTWWZ}
K. J. R. Liu, W. Trappe, Z. J. Wang, M. Wu, and H. Zhao,
{\it Multimedia Fingerprinting Forensics for Traitor Tracing},
New York: Hindawi, 2005.








\bibitem{PZ}
C. I. Podilchuk and W. Zeng,
``Image-adaptive watermarking using visual models,"
{\it IEEE J. Select. Areas Commun.},
vol. 16, no. 4, pp. 525-539,  1998.









\bibitem{SWGM}
C. Shangguan, X. Wang, G. Ge, and Y. Miao
``New bounds for frameproof codes,"
{\it IEEE Trans. Inform. Theory},
vol. 13, no. 11, pp. 7247-7252,  2017.



\bibitem{SSW}
J. N. Staddon, D. R. Stinson, and R. Wei,
``Combinatorial properties of frameproof and traceablity codes,"
{\it IEEE Trans. Inform. Theory},
vol. 47, pp. 1042-1049,  2001.



\bibitem{Sto}
H. S. Stone,
Analysis of Attacks on Image Watermarks with Randomized Coefficients,
NEC Res. Inst., Princeton, NJ, 1996,  Tech. Rep. 96-045.

\bibitem{SEG}
J. K. Su, J. J. Eggers, and B. Girod,
``Capacity of digital watermarks subjected to an optimal collusion attack,"
in {\it Proc. Eur. Signal Process. Conf.},
2000.








\bibitem{TWL}
W. Trappe, M. Wu, and K. J. R. Liu,
``Collusion-resistant fingerprinting for multimedia,"
in {\it Proc. IEEE Int. Conf. Acoustics, Speech, Signal Process.},
Orlando, FL, 2002, pp. 3309-3312.

\bibitem{TWWL}
W. Trappe, M. Wu, Z. J. Wang, and K. J. R. Liu,
``Anti-collusion fingerprinting for multimedia,"
{\it IEEE Trans. Signal Process.},
vol. 51, no. 4, pp. 1069-1087, 2003.







\end{thebibliography}
\end{document}